\renewcommand{\le}{\leqslant}
\renewcommand{\ge}{\geqslant}
\renewcommand{\Pr}{\text{Pr}}
\newcommand{\N}{\mathbb{N}}
\newcommand{\F}{\mathbb{F}}
\newcommand{\Z}{\mathbb{Z}}
\newcommand{\wt}{\text{wt}}
\newcommand{\abs}[1]{\left|#1\right|}
\newcommand{\mb}[1]{\mathbf{#1}}
\newcommand{\bs}[1]{\boldsymbol{#1}}
\newcommand{\mc}[1]{\mathcal{#1}}
\renewcommand{\Pr}[1]{\textbf{Pr}\left[#1\right]}
\newcommand{\condPr}[2]{\textbf{Pr}\left[#1 \; | \; #2\right]}
\newcommand{\Ex}[1]{\textbf{E}\left[#1\right]}
\newtheorem{theorem}{Theorem}[section]
\newtheorem{prop}[theorem]{Proposition}
\newtheorem{lemma}[theorem]{Lemma}
\newtheorem{defn}{Definition}[section]
\newtheorem{remark}{Remark}[section]
\begin{document}
    
\title{Near-Tight Bounds for 3-Query Locally Correctable Binary \\ Linear Codes via Rainbow Cycles}
\author{Omar Alrabiah\thanks{Department of Electrical Engineering and Computer Science, UC Berkeley, Berkeley, CA, 94709, USA. Email: \url{oalrabiah@berkeley.edu}. Research supported in part by a Saudi Arabian Cultural Mission (SACM) Scholarship, NSF CCF-2210823 and V.\ Guruswami's Simons Investigator Award.} 
\and Venkatesan Guruswami\thanks{Department of Electrical Engineering and Computer Science, Department of Mathematics, and the Simons Institute for the Theory of Computing, UC Berkeley, Berkeley, CA, 94709, USA. Email: \url{venkatg@berkeley.edu}. Research supported by a Simons Investigator Award and NSF grants CCF-2210823 and CCF-2228287.}
}
    
\date{}
\maketitle
\thispagestyle{empty}

\begin{abstract}
We prove that a binary linear code of block length $n$ that is locally correctable with $3$ queries against a fraction $\delta > 0$ of adversarial errors must have dimension at most $O_{\delta}(\log^2 n \cdot \log \log n)$. This is almost tight in view of quadratic Reed-Muller codes being a $3$-query locally correctable code (LCC) with dimension $\Theta(\log^2 n)$. Our result improves, for the binary field case, the $O_{\delta}(\log^8 n)$ bound obtained in the recent breakthrough of \cite{KM23} (and the more recent improvement to $O_{\delta}(\log^4 n)$ for binary linear codes announced in \cite{Yan24}).

\smallskip
Previous bounds for $3$-query linear LCCs proceed by constructing a $2$-query locally decodable code (LDC) from the $3$-query linear LCC/LDC and applying the strong bounds known for the former.
Our approach is more direct and proceeds by bounding the covering radius of the dual code, borrowing inspiration from~\cite{IS20}. That is, we show that if $x \mapsto (v_1 \cdot x, v_2 \cdot x, \ldots, v_n \cdot x)$ is an arbitrary encoding map $\F_2^k \to \F_2^n$ for the $3$-query LCC, then all vectors in $\F_2^k$ can be written as a $\widetilde{O}_{\delta}(\log n)$-sparse linear combination of the $v_i$'s, which immediately implies $k \le \widetilde{O}_{\delta}((\log n)^2)$. The proof of this fact proceeds by iteratively reducing the size of any arbitrary linear combination of at least $\widetilde{\Omega}_{\delta}(\log n)$ of the $v_i$'s. We achieve this using the recent breakthrough result of~\cite{ABSZZ23} on the existence of rainbow cycles in properly edge-colored graphs, applied to graphs capturing the linear dependencies underlying the local correction property.
\end{abstract}

\section{Introduction}
\label{sec:intro}

Local correction refers to the notion of correcting a single bit of a received codeword by querying very few other bits of the codeword at random. More concretely, a binary code, which is simply a subset $C \subseteq \{0,1\}^n$, is said to be \emph{locally correctable} using $r \in \N$ queries from a fraction $\delta \in (0,1)$ of errors, abbreviated $(r,\delta)$-LCC, if it can recover any given bit of a codeword $c \in C$ with probability noticeably higher than $1/2$ (say $2/3$) by randomly reading $r$ bits of a received codeword $y \in \{0,1\}^n$ that is at most $\delta n$ away from $c$ in Hamming distance. Usually, we are interested in the case when $\delta$ is a fixed constant bounded away from $0$ as the code length $n \to \infty$, and in this case, we refer to such a code as simply a $r$-LCC.

Throughout this paper, we will restrict our attention to only binary linear codes, particularly binary linear $r$-LCCs. A binary linear code $C$ of block length $n$ is simply a subspace of $\F_2^n$, where $\F_2$ is the field of two elements. If the dimension of $C$ as a $\F_2$-subspace is $k$, then one refers to it as an $[n,k]$ code. A generator matrix of $C$ is an $n \times k$ matrix whose columns form a basis of $C$. Let us fix one such choice of generator matrix $M$, and denote its rows by $v_1,v_2,\dots,v_n \in \F_2^k$. We then have the encoding map $ M : \F_2^k \to C$ given by $Mx = [v_1 \cdot x, v_2 \cdot x, \ldots, v_n \cdot x]^\top$.

Among its many uses, locally correctable codes play a central role in PCP constructions, where they allow to self-correct a function, purportedly a codeword, after a codeword test ascertains that the function is close to a codeword. They thus allow effective noise-free oracle access to a noisy function, with a small price in the number of queries. We refer the reader to the surveys~\cite{Tre04, Yek12, Gop18} for more on the applications and connections of locally correctable codes.

Despite its slew of uses, the best known $r$-LCCs (even existentially) have $n \approx \exp(k^{1/(r-1)})$, which is achieved by the degree $(r-1)$ Reed-Muller code (evaluations of polynomials of degree $(r-1)$ in $m=O_q(\log n)$ variables at all points in $\F_q^m$).\footnote{This code requires $q \ge r+1$, but one can also get say binary codes by picking $q$ to be a power of $2$ and concatenating the Reed-Muller code over $\F_q$ with the binary Hadamard code.} This has remained the case for constant-query local correction since their conception. Indeed, much of the progress on locally correctable codes for a constant number of queries has focused on proving their limitations, specifically for concrete values of $r$.\footnote{This statement holds only for the classical constant query regime. Indeed, there have been some great works for when the number of queries $r$ grows with $n$~\cite{GKS13, KSY14, HOW15, KMRS17, GKORS18} and for relaxed notions of local corrections~\cite{GL19, GRR20, AS21, CY22, KM23b, CY23}. There is also a brighter landscape of lower bounds for harsher error models~\cite{OPC15, BBGKZ20, BGGZ21, BCGLZZ22, BBCGLZZ23, Gup23}.} For $r=1$, it has long been known that $1$-LCCs do not exist~\cite{KT00}. For $r=2$, it has also long been known that one must indeed have $n \ge \exp(\Omega_q(k))$~\cite{GKST06,KdW04}, so the Hadamard code (and the degree one Reed-Muller code) is indeed optimal. 

For $r=3$ and larger, our understanding of $r$-LCCs is abysmal. The known limitations of $r$-LCCs, which also apply to $r$-query locally \emph{decodable} codes (which offer the weaker guarantee of local correction only for the $k$ message symbols encoded by the codeword), stood at the bound $k \le \widetilde{O}(n^{1 - 1/\lceil 2/r \rceil})$~\cite{KdW04, Woo07, Woo12} for a long while. In particular, for $3$-LCCs, the quadratic bound $k \le O(\sqrt{n})$ stood for more than a decade. This was recently improved to $k \le \tilde{O}(\sqrt[3]{n})$ in \cite{AGKM23} (with recent logarithmic factor improvements by~\cite{HKMMS24}), and this bound also applied to $3$-query locally decodable codes (LDCs). Then, in a tour de force breakthrough, Kothari and Manohar~\cite{KM23} gave an exponential improvement and showed that $k \le O_q(\log^8 n)$ for $3$-query \emph{linear} LCCs (over any field $\F_q$). Since there are beautiful constructions of $3$-query linear LDCs of block length sub-exponential in $k$~\cite{Yek08, Rag07, Efr12, DGY11}, their bound demonstrated a strong separation between local decodability and local correctability with $3$ queries for linear codes. Nonetheless, their result left open the optimality of degree $2$ Reed-Muller codes as binary linear $3$-LCCs, which have dimension $k = \Theta(\log^2{n})$. Our main result is that they are (almost) optimal.

\begin{theorem}[Main]
\label{thm:main}
If $C$ is an $[n,k]$ binary linear $(3,\delta)$-LCC, then $k \le O(\delta^{-2} \log^2 n \cdot \log \log n)$.
\end{theorem}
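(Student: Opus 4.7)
I adopt the dual covering-radius approach of \cite{IS20}: I aim to show that every $w \in \F_2^k$ can be written as an $\F_2$-sum of at most $t := O(\delta^{-2} \log n \log \log n)$ of the vectors $v_1, \ldots, v_n$. A standard counting argument then gives $2^k \le \binom{n}{\le t}$, hence $k \le t \log(en/t) = O(\delta^{-2} \log^2 n \log \log n)$. To prove the sparsity bound, fix $w$, take a minimum-cardinality representation $w = \sum_{i \in S} v_i$, and assume for contradiction that $|S| > t$. I will construct a strictly shorter representation.

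\textbf{Local data and graph construction.} A Katz--Trevisan-style extraction from the $(3,\delta)$-LCC property yields, for each $i \in [n]$, a matching $M_i$ of $\Omega(\delta n)$ pairwise disjoint triples $\{a,b,c\} \subseteq [n]$ satisfying $v_a + v_b + v_c = v_i$. Moreover, minimality of $S$ implies that every such triple with $i \in S$ has $|\{a,b,c\} \cap S| \le 1$: if two of $\{a,b,c\}$ belonged to $S$, replacing $v_i$ in the sum by $v_a + v_b + v_c$ would produce cancellations that strictly shorten the representation. Using $M_i$, I build a properly edge-colored graph $G$ on $V(G) = [n]$ as follows: for each $i \in S$ and each triple $T = \{a,b,c\} \in M_i$, randomly pick one of its three pairs and add it as an edge of color $i$; the omitted vertex is the \emph{hub} of that edge. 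Because the triples in $M_i$ are disjoint, all edges of color $i$ form a matching, so $G$ is properly edge-colored, with $\Omega(|S|\delta n)$ edges on $n$ vertices.

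\textbf{Rainbow cycle gives a linear dependence.} Invoking the rainbow-cycle theorem of \cite{ABSZZ23}, which guarantees a rainbow cycle in any properly edge-colored graph with average color degree above roughly $\log n \log \log n$, the condition $|S| \ge t$ places the average degree $\Omega(|S|\delta)$ of $G$ past this threshold and produces a rainbow cycle $j_0 j_1 \cdots j_{\ell-1} j_0$ with distinct colors $i_1, \ldots, i_\ell \in S$ and associated hubs $h_1, \ldots, h_\ell$. Summing the $\ell$ correction relations $v_{j_{t-1}} + v_{j_t} + v_{h_t} = v_{i_t}$ cancels every cycle vertex $v_{j_t}$ over $\F_2$ (each appears in exactly two of these relations), leaving
\[
\sum_{t=1}^\ell v_{h_t} = \sum_{t=1}^\ell v_{i_t}.
\]
Substituting into $w$ yields the alternative representation $w = \sum_{i \in S \setminus \{i_1, \ldots, i_\ell\}} v_i + \sum_{t=1}^\ell v_{h_t}$, which has at most $|S|$ terms after $\F_2$-cancellation.

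\textbf{Securing strict reduction (main obstacle).} The critical step is ensuring strict shortening rather than mere reshuffling: this requires either two hubs to coincide (producing an $\F_2$-cancellation) or some hub to lie in $S \setminus \{i_1, \ldots, i_\ell\}$. I expect to secure this by combining (a) the randomness in the hub assignment, which makes each hub close to uniform on its triple, with (b) the minimality-derived structural constraint that every triple for $i \in S$ has at most one $S$-element, so whenever the hub is chosen to be the $S$-element of a 1-$S$ triple the case $h_t \in S$ is triggered automatically. A careful averaging over random edge choices, together with a counting argument on the rainbow cycles guaranteed by \cite{ABSZZ23} (or equivalently, applying the rainbow-cycle theorem to a multigraph variant in which all three edges per triple appear under distinct labels $(i,h)$), should produce at least one reducing cycle, contradicting minimality and forcing $|S| \le t$. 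Calibrating the probabilistic loss from this averaging against the rainbow-cycle threshold is what I expect to produce the $\delta^{-2}$ and $\log \log n$ overheads in the final bound, and this is anticipated to be the main technical obstacle of the argument.
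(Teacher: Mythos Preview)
Your framework matches the paper's: bound the dual covering radius by showing every $x\in\F_2^k$ is an $O(\delta^{-2}\log n\log\log n)$-sparse sum of the $v_i$, via rainbow cycles in the properly edge-colored graph built from the LCC matchings. You also correctly isolate the crux: a single rainbow cycle only yields a length-preserving \emph{shift} $\sum_t v_{i_t}=\sum_t v_{h_t}$, not a strict shortening.

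Your proposed fixes for this step do not work, and this is where the paper's argument genuinely diverges from yours. Averaging over random hub choices is circular: the rainbow cycle supplied by \cite{ABSZZ23} depends on the graph and hence on the hub assignment, so you cannot hold the cycle fixed while re-randomizing hubs. The minimality observation $|\{a,b,c\}\cap S|\le 1$ is correct but essentially vacuous here, since $|S|=\widetilde O_\delta(\log n)$ while each $M_i$ has $\Omega(\delta n)$ disjoint triples, so all but $O(|S|)$ triples per color have zero elements of $S$ and the event ``hub lands in $S$'' almost never fires. The multigraph with labels $(i,h)$ either makes every color class a single edge or destroys proper edge-coloring. The paper's resolution is different and consists of two ideas you are missing. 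First (\cref{lem:shift-representation}): for any $T$ above the rainbow threshold $\Theta(\delta^{-1}\log n\log\log n)$, the set $W\subseteq[n]$ of indices appearing in \emph{some} size-$\le|T|$ shift of $T$ has $|W|\ge(\delta/2)n$. This is proved by contrapositive---if $|W|$ were small, build the colored graph using only those triples whose designated hub $a_E$ lies \emph{outside} $W$ (still $\ge(\delta/2)n$ edges per color remain), apply \cite{ABSZZ23}, and the resulting cycle's hubs produce a shift witnessing new elements of $W$, a contradiction. Second, partition a minimal $I$ into $p=\lceil 4/\delta\rceil$ blocks $T_1,\dots,T_p$, each above threshold; since $\sum_\ell |W_\ell|\ge p\cdot(\delta/2)n\ge 2n$, two of the $W_\ell$ share an index $j$, and shifting those two blocks to representations both containing $j$ makes the two copies of $v_j$ cancel, giving size $\le|I|-2$. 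The $\delta^{-2}$ is exactly (per-block threshold $\delta^{-1}$)$\times$($\delta^{-1}$ blocks), not a probabilistic loss.
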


Modulo the $\log\log n$ factor, this settles the dimension versus block length trade-off of 3-query \emph{binary linear} LCCs. Recently, following \cite{KM23}, an improved upper bound of $k \le O(\log^4 n)$ was obtained for binary linear $3$-LCCs in \cite{Yan24}. Even more recently, an independent result of~\cite{KM24} shows an optimal $k \le O(\log^2{n})$ bound for binary linear \emph{design} $3$-LCCs. Such $3$-LCCs have the additional property that the linear dependencies of length $4$ formed by the query sets (see~\cref{def:lcc-combi}) cover each pair of indices in $[n]$ exactly once. We note that a weaker bound of $k \le O(\log^3{n})$ for binary linear design $3$-LCCs was previously shown in~\cite{Yan24}.

Our proof method additionally sheds some light on the structure of binary linear $3$-LCCs. Namely, we prove~\cref{thm:main} by upper bounding the \emph{covering radius of the dual code}.\footnote{The covering radius of a linear code $C_0 \subseteq \F_2^n$ is the minimum $r$ such that every point in $\F_2^n$ is within Hamming distance $r$ from some codeword $c \in C_0$. If $H \in \F_2^{m \times n}$ is a parity check matrix of a linear code $C_0$, then it is the minimum $r$ for which every $s \in \F_2^m$ is the sum of at most $r$ columns of $H$.} This offers a more direct understanding of the structure and limitations of binary linear $3$-LCCs, which can be harder to discern from recent developments~\cite{AGKM23, KM23, HKMMS24, Yan24}. Indeed, all such works proceed by constructing a much longer 2-query LDC from the $3$-query locally correctable linear code and appealing to the known exponential lower bounds for $2$-LDCs~\cite{GKST06, KdW04}.\footnote{See Appendix B of~\cite{AGKM23} and Section 7.6 of \cite{KM23} for the proper formulation of their blocklength lower bound proofs as reductions to $2$-query LDCs.}

Our main result on the covering radius of the dual code of a binary linear $3$-LCC is the following.

\begin{theorem}
\label{thm:intro-cov-rad}
Let $C$ be a binary linear $(3, \delta)$-LCC with generator matrix $M \in \F_2^{n \times k}$. Then every $x \in \F_2^k$ can be expressed as the sum of at most $O(\delta^{-2} \log n \cdot \log \log n)$ rows of $M$. 
\end{theorem}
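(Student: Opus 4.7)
The plan is to prove the theorem by an iterative size-reduction. Fix any $x \in \F_2^k$ in the row-span of $M$ and let $S \subseteq [n]$ be a subset of minimum size with $x = \sum_{i \in S} v_i$. I aim to show that once $|S|$ exceeds $C\delta^{-2} \log n \log\log n$ for a suitable absolute constant $C$, there exists a strictly smaller subset with the same property, contradicting minimality.

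The first step is to extract the combinatorial content of local correctability: a standard normal-form argument (cf.~\cite{KT00, KdW04}) gives that each $i \in [n]$ participates in $\Omega(\delta n)$ $4$-element linear dependencies among the rows, i.e., triples $\{a,b,c\}$ with $v_i + v_a + v_b + v_c = 0$. These $4$-dependencies generate the only ``moves'' available to us: if $D \subseteq [n]$ is the $\F_2$-support of a sum of such $4$-dependencies, then $S \triangle D$ is also a representation of $x$. Hence the reduction step boils down to producing some such $D$ with $|D \cap S| > |D \setminus S|$.

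To produce $D$, I would construct an auxiliary graph $G$ whose vertex set is (a subset of) $[n] \setminus S$ and whose edges are properly colored by elements of $S$. Each edge of color $i \in S$ comes from a $4$-dependency at $i$, with two of the other elements of the dependency serving as the endpoints (the third ``other'' element being handled by restricting to deps with a prescribed intersection pattern with $S$ or by encoding each dep as a small number of edges). The intended correspondence is: a rainbow cycle $e_1, \dots, e_r$ in $G$ with colors $i_1, \dots, i_r \in S$ all distinct should correspond to a sum of $r$ $4$-dependencies whose $\F_2$-support is $\{i_1, \dots, i_r\}$ together with a small controllable residue, because the cycle structure forces each vertex on the cycle to appear in exactly two consecutive edges and therefore cancel in the XOR. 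Invoking the rainbow cycle theorem of~\cite{ABSZZ23}---which guarantees a rainbow cycle in any $N$-vertex properly edge-colored graph with $\widetilde{\Omega}(N \log N)$ edges---and using that $G$ has $\Omega(\delta n)$ edges per color across $|S|$ color classes, we should obtain such a cycle (hence the required $D$) once $|S| \gtrsim \delta^{-2} \log n \log\log n$.

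The main obstacle is designing $G$ so that (i) the coloring is proper in the sense required by~\cite{ABSZZ23}, (ii) each color class is dense enough on the vertex set, and (iii) the residue inherited from the rainbow cycle is small enough that the resulting $D$ genuinely satisfies $|D\cap S| > |D\setminus S|$. Since each $4$-dependency at $i$ has three ``other'' elements that can split arbitrarily between $S$ and $[n]\setminus S$, controlling the residue is delicate and likely needs a preliminary case split: if many deps at $i$ already have two or more of their other elements inside $S$, then color $i$ alone provides a reduction; otherwise we restrict attention to deps concentrated outside $S$ and encode them as edges on $[n]\setminus S$, where the rainbow cycle machinery takes over. The final $\log\log n$ factor (as opposed to a naive $\log^2 n$) should fall out directly from the near-tight quantitative bound in the~\cite{ABSZZ23} rainbow cycle theorem.
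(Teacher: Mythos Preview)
Your high-level plan---iterative compression via rainbow cycles in a graph whose colors are indexed by the current support $S$---is exactly the skeleton of the paper's argument. But there is a genuine gap at the step where you hope the rainbow cycle yields $|D\cap S|>|D\setminus S|$: it does not.

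Concretely, for $t\in S$ and a dependency $\{a_E,b_E,c_E\}\in\mathcal{H}_t$ you encode the edge $\{b_E,c_E\}$ of color $t$ and set aside the ``dropped'' vertex $a_E$. A rainbow cycle on colors $t_1,\dots,t_m$ with edges $e_{E_1},\dots,e_{E_m}$ cancels all the $b$'s and $c$'s (they each appear in exactly two consecutive edges), so the identity you obtain is
\[
\sum_{s=1}^m v_{t_s}=\sum_{s=1}^m v_{a_{E_s}}.
\]
Thus $m$ elements of $S$ are replaced by exactly $m$ residue vertices. In your notation $|D\cap S|=m=|D\setminus S|$ generically, and your proposed case split cannot force strict inequality: when $|S|=\widetilde{O}_\delta(\log n)\ll n$, almost every triple in $\mathcal{H}_t$ has all three other elements outside $S$, so the residues $a_{E_s}$ land outside $S$ and collide neither with $S$ nor with each other. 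A single rainbow cycle therefore gives only a \emph{shift} $T\mapsto T'$ with $|T'|\le|T|$, never a strict contraction.

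The paper supplies the missing idea. First it upgrades the shift statement: it shows (Lemma~3.2) that the set $W$ of indices reachable as some residue vertex $a_{E_s}$ in \emph{some} shift of $T$ has size at least $(\delta/2)n$, by building the colored graph only from triples whose dropped vertex lies outside the current $W$ and reapplying the rainbow cycle theorem. Second---and this is the step you are missing---it partitions $S$ into $p=\lceil 4/\delta\rceil$ blocks $T_1,\dots,T_p$, shifts each block, and pigeonholes: since $\sum_\ell|W_\ell|\ge p\cdot(\delta/2)n>n$, two blocks admit shifts $T_{\ell_1}',T_{\ell_2}'$ sharing a common index $j$, and then the two copies of $v_j$ cancel, yielding a representation of size at most $|S|-2$. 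The partitioning into $O(\delta^{-1})$ blocks is also what produces the second factor of $\delta^{-1}$ in the final bound.
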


Since a generator matrix of $C$ is also a parity check matrix of $C^\perp$, Theorem~\ref{thm:intro-cov-rad} as stated upper bounds the covering radius of $C^\perp$. Note that Theorem~\ref{thm:intro-cov-rad} immediately implies Theorem~\ref{thm:main}, as it shows $2^k \le \sum_{j=0}^T {n \choose T} \le n^{T+1}$ for $T = O(\delta^{-2} \log n\cdot  \log \log n)$. We remark here that the degree $2$ Reed-Muller code has a covering radius of $\Theta(\log n)$, which makes our bound in \cref{thm:intro-cov-rad} only a $\log{\log{n}}$ factor away from the optimal bound.

Our inspiration for Theorem~\ref{thm:intro-cov-rad} came from a work of Iceland and Samorodnitsky~\cite{IS20}, who prove that the dual $C^\perp$ of a binary linear $(2,\delta)$-LCC $C$ has $O(\delta^{-1})$ covering radius (which then immediately implies that $|C| \le n^{O(\delta^{-1})}$).\footnote{They also deduce a covering radius upper bound of $O(n^{(r-2)/(r-1)})$ for the $r$-query case by reducing to the $2$-query case. Note that, for $r \ge 3$, the resulting bounds for LCCs are weaker than the best-known ones.} They prove this via analysis of  the ``discrete Ricci curvature" of the ``coset leader graph" associated with $C$. We develop a more elementary treatment of their ideas and give a similar coupling argument to bound the diameter of the Cayley graph $\text{Cay}(\F_2^k, \{v_1,v_2,\dots,v_n\})$, which is isomorphic to their coset leader graph. Note that this diameter is precisely the covering radius of $C^\perp$.  Using our viewpoint, we produce a new proof of the previously known $k \le O(\log n)$ upper bound for linear $2$-query LDCs over any finite field (the proof in \cite{IS20} only applied to LCCs); we present this proof in Appendix~\ref{sec:new-2-ldc-proof}.

\paragraph{Rainbow cycles in properly edge-colored graphs.}
Our proof of Theorem~\ref{thm:intro-cov-rad} crucially relies on finding \emph{rainbow cycles} in properly edge-colored graphs. Rainbow cycles are simply cycles where each color appears at most once. There has been numerous works to that end~\cite{KMSV07, DLS12, Jan20, JS22, Tom22, KLLT22, ABSZZ23}, culminating in the recent breakthrough of~\cite{ABSZZ23} showing that any properly edge-colored $n$-vertex graph with average degree $\Omega(\log{n}\cdot \log\log{n})$ must have a rainbow cycle. This bound is tight up to the $O(\log\log{n})$ factor---if one colors the edges of the Boolean hypercube with their respective direction, then one obtains a properly edge-colored $\log n$-regular $n$-vertex graph that has no rainbow cycles. 

Our $O(\log n \cdot \log \log n)$ bound in our Theorem~\ref{thm:intro-cov-rad} is inherited in a black-box fashion from the rainbow cycle bound of~\cite{ABSZZ23}. Should a tight $\Theta(\log n)$ be established for the minimum average degree guaranteeing a rainbow cycle, we would immediately get an asymptotically tight $O(\log^2 n)$ dimension upper bound for binary linear $3$-LCCs in Theorem~\ref{thm:main}. In fact, in our application, the concerned edge-colored graphs have the further property that each color class has $\Omega(n)$ edges. So it would suffice to improve the rainbow cycle bound for such graphs.

\paragraph{LCC lower bounds from rainbow LDC lower bounds.} Our $3$-LCC result based on rainbow cycles turns out to be a specific instance of a more general reduction from lower bounds for $r$-LCCs to a ``rainbow" form of lower bounds for binary linear $(r-1)$-query LDCs---a stronger form of LDC lower bounds than usual binary linear $(r-1)$-LDC lower bounds. Our main result is the $r=3$ case of this phenomenon, where we have such strong ``rainbow" bounds for binary linear $2$-query LDCs. 

As for bounds on the so-called ``rainbow" binary linear $r$-LDC lower bounds problem, one can prove the same bound of $k \le \widetilde{O}(n^{1-2/r})$ for even $r \ge 4$ known for usual $r$-LDCs in nearly the same fashion! As it turns out, the direct sum transformation of~\cite{KdW04} from $r$-LDCs to $2$-LDCs has the additional property that it maintains rainbow cycles between the two LDCs. By using the strong bounds of~\cite{ABSZZ23},\footnote{Note that this reduction crucially relies on the strong bound of $k \le O(\log{n} \log{\log{n}})$ by~\cite{ABSZZ23}. Indeed, if one instead uses the previous state-of-the-art results of~\cite{JS22, KLLT22} on rainbow cycles of $k \le O(\log^2{n})$, then this reduction would fail to yield any non-trivial bound.} we can therefore find a rainbow cycle in the $2$-LDC and revert it to a rainbow cycle in the $r$-LDC. From our general reduction, we can therefore deduce improved lower bounds of the form $k \le \widetilde{O}(n^{1 - 2/(r-1)})$ for binary linear $r$-LCCs for all odd $r \ge 5$, which were previously conjectured by~\cite{KM23} for all $r \ge 4$. This is the content of the following theorem.

\begin{theorem}
\label{thm:odd-lcc-lbs}
If $C$ is an $[n,k]$ binary linear $(r,\delta)$-LCC for odd $r \ge 5$, then $k \le O\left(\delta^{-2}n^{1 - \frac{2}{r-1}}\log^3{n}\right)$.
\end{theorem}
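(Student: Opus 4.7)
\textbf{Proof plan for \cref{thm:odd-lcc-lbs}.} The plan is to prove the theorem by composing two ingredients highlighted in the paragraphs immediately preceding the statement: a general reduction from binary linear $r$-LCC lower bounds to binary linear \emph{rainbow} $(r-1)$-LDC lower bounds, and an upper bound on rainbow $(r-1)$-LDCs obtained from the direct-sum transformation of~\cite{KdW04} together with the rainbow-cycle theorem of~\cite{ABSZZ23}. The odd-parity hypothesis on $r$ enters here: since $r \ge 5$ is odd, $r-1 \ge 4$ is even, placing the reduced LDC in the regime where the rainbow bound matches the classical $\widetilde{O}(n^{1-2/(r-1)})$ bound.

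First I would apply the general reduction to the binary linear $(r,\delta)$-LCC $C$ of dimension $k$ and blocklength $n$, producing vectors $v_1, \ldots, v_n \in \F_2^k$ together with $k$ linearly independent color vectors $u_1, \ldots, u_k \in \F_2^k$ and matchings $M_1, \ldots, M_k$ of $(r-1)$-subsets of $[n]$, each of size $\Omega(\delta n)$, satisfying $\sum_{j \in T} v_j = u_t$ for every $T \in M_t$. This is the rainbow $(r-1)$-LDC structure that will be bounded in the next step, and the linear independence of the $u_t$'s is precisely what will forbid rainbow cycles downstream.

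Next I would apply the direct-sum transformation of~\cite{KdW04} to obtain a rainbow 2-LDC on vertex set (a suitable subset of) $\binom{[n]}{(r-1)/2}$ by pairing each $(r-1)$-hyperedge $T \in M_t$ into two $((r-1)/2)$-subsets. The key structural property to verify is that a rainbow cycle of length $\ell$ in this 2-LDC's color graph would unravel (via the symmetric-difference structure of the $((r-1)/2)$-subsets along the cycle) into $\sum_{s=1}^\ell u_{t_s} = 0$ for distinct colors $t_1, \ldots, t_\ell$, contradicting the linear independence of the $u_i$'s. Hence the 2-LDC is rainbow-cycle-free, and the breakthrough bound of~\cite{ABSZZ23} forces the average degree of its color graph to be $O(\log N \log \log N)$. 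Combining this degree bound with edge counting over the $k$ matchings of size $\Omega(\delta n)$, together with the refined counting needed to convert the rainbow 2-LDC bound into a bound for the original $(r-1)$-LDC, yields $k \le O(\delta^{-2} n^{1-2/(r-1)} \log^3 n)$.

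The main obstacle is the last counting step: a purely edge-counting application of~\cite{ABSZZ23} to the direct-sum 2-LDC gives only a weaker exponent (on the order of $n^{(r-3)/2}$) that fails to improve on the trivial bound for $r \ge 5$. To recover the tight polynomial exponent $1 - 2/(r-1)$, one needs to adapt the recursive partitioning argument underlying the classical $(r-1)$-LDC bound to the rainbow setting, using~\cite{ABSZZ23} as a drop-in replacement for the 2-LDC bound invoked in the non-rainbow analysis; the paper's ``in nearly the same fashion'' remark asserts that this goes through because the rainbow structure survives each reduction. Bookkeeping the losses through the general reduction, the direct sum, and the rainbow-cycle step produces the specific $\delta^{-2}$ and $\log^3 n$ factors in the final statement.
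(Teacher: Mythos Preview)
Your two-ingredient outline (general reduction from $r$-LCC to rainbow $(r{-}1)$-LDC, then a rainbow $(r{-}1)$-LDC bound via \cite{KdW04} plus \cite{ABSZZ23}) is exactly the paper's route. But both ingredients are mis-described in ways that would block an actual proof.

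\textbf{The reduction.} The paper's reduction (\cref{thm:main-general}) does \emph{not} extract ``$k$ matchings $M_1,\dots,M_k$ of $(r{-}1)$-subsets with linearly independent targets $u_1,\dots,u_k$'' from the LCC; there is no such structure available. An $r$-LCC gives you $n$ matchings of $r$-sets, one per coordinate, with $\sum_{j\in E} v_j = v_i$ for $E\in\mc{H}_i$. Dropping one vertex $a$ from $E$ yields $(r{-}1)$-sets whose sum equals $v_i+v_a$, which depends on the hyperedge, not just on a fixed color. The paper instead reruns the covering-radius compression argument of \cref{sec:binary-3-lcc-proof} verbatim: for any $T\subseteq[n]$ with $|T| \ge k_{\text{rainbow}}^{(r-1)}(\delta/2,n)$, the $|T|$ properly colored $(r{-}1)$-matchings $\{e_E : E\in\mc{H}_t,\ a_E\notin W\}$ must contain a rainbow even cover, which drives \cref{lem:shift-representation} and hence $k \le O(\delta^{-1}\log n \cdot k_{\text{rainbow}}^{(r-1)}(\delta/2,n))$. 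The colors are indexed by $T\subseteq[n]$, not by $[k]$.

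\textbf{The rainbow $(r{-}1)$-LDC bound.} You correctly see that taking the direct-sum graph on $\binom{[n]}{(r-1)/2}$ and applying \cite{ABSZZ23} yields only $n^{(r-3)/2}$, which is useless. But the fix is not a ``recursive partitioning'' argument (the classical \cite{KdW04} bound is not recursive). The paper's proof of \cref{prop:even-rainbow-ubs} instead takes the vertex set $\binom{[n]}{\ell}$ with $\ell = \Theta(n^{1-2/(r-1)})$, so that $\log N \approx n^{1-2/(r-1)}\log n$, and puts an edge $\{A,B\}$ of color $i$ whenever $A\oplus B\in\mc{H}_i$. Each hyperedge then contributes $\binom{r-1}{(r-1)/2}\binom{n-(r-1)}{\ell-(r-1)/2}\approx N(\ell/n)^{(r-1)/2}$ edges. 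This graph is \emph{not} automatically properly edge-colored; the paper prunes, via a counting argument, the $o(1)$-fraction of edges that share an endpoint with another edge of the same color. After pruning, a single application of \cref{thm:abszz} gives a rainbow cycle $(A_1,\dots,A_m,A_1)$, and setting $E_s=A_s\oplus A_{s+1}$ yields $\bigoplus_s E_s=\varnothing$ with distinct colors. This directly gives $k_{\text{rainbow}}^{(r-1)}(\delta,n)\le O(\delta^{-1}n^{1-2/(r-1)}\log^2 n)$, and combining with the reduction produces the stated $O(\delta^{-2}n^{1-2/(r-1)}\log^3 n)$.
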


Note that the previously best known bound for binary linear $r$-LCCs for odd $r \ge 5$ (which also held for binary linear $r$-LDCs and even binary linear $(r+1)$-LDCs) was $\widetilde{O}(n^{1-2/(r+1)})$~\cite{KdW04, Woo07, HKMMS24}. We outline our general reduction and the proof of~\cref{thm:odd-lcc-lbs} in Section~\ref{sec:rainbow-ldc-higher-query}.

\paragraph{Follow-up questions.} 
Two salient follow-up questions to our work are removing the linearity assumption in~\cref{thm:main} and extending~\cref{thm:intro-cov-rad} to arbitrary finite fields. Since the statement of~\cref{thm:intro-cov-rad} crucially relies on considering rows of a generator matrix of the $3$-LCC, it makes it unclear how to remove the linearity assumption in~\cref{thm:main}. As for extending our main results to arbitrary finite fields, it is easy to extend~\cref{thm:intro-cov-rad} to finite fields of characteristic $2$ for a $\text{poly}(\abs{\F})$ loss in the upper bound on the size of the sum by considering the code defined in Appendix A of~\cite{KM23}. For finite fields of higher characteristic, the presence of negative signs presents a tricky situation for the application of the result of~\cite{ABSZZ23} in the proof of~\cref{thm:intro-cov-rad}. We leave it as an interesting open problem to extend~\cref{thm:intro-cov-rad} to linear $3$-LCCs over arbitrary finite fields. 

There is additionally the problem of extending~\cref{thm:odd-lcc-lbs} to all $r \ge 4$. In light of our proof method of~\cref{thm:odd-lcc-lbs}, it would seem that a cubic bound of $k \le \widetilde{O}(\sqrt[3]{n})$ for binary linear $[n,k]$ $4$-LCCs is reasonable to hope for by extending the cubic $3$-LDC lower bound of~\cite{AGKM23} to their analogous ``rainbow" version and applying our general reduction from binary linear $r$-LCC lower bounds to ``rainbow'' binary linear $(r-1)$-LDC lower bounds. However, the $3$-LDC to $2$-LDC transformation in~\cite{AGKM23} creates new query sets by adding together the original query sets, which disrupts the correspondence of the colors between the $3$-LDC and the derived $2$-LDC. Nonetheless, it would still be interesting to show a cubic ``rainbow'' binary linear $3$-LDC lower bound using the techniques of~\cite{AGKM23}.

\subsection{Proof overview}
\label{subsec:proof-overview}

While our proof of~\cref{thm:main} is rather short (just 2 pages, and self-contained modulo the rainbow cycle bound), we will nonetheless present a proof overview of it to showcase its key ideas. Consider a $(3,\delta)$-LCC whose generator matrix has $v_1,v_2,\dots,v_n \in \F_2^k$ as rows. It is well known that any binary linear $(3,\delta)$-LCC has a collection of hypergraphs $\mc{H}_1, \ldots , \mc{H}_n$ over $[n]$ such that for each $i \in [n]$, the hypergraph $\mc{H}_i$ consists of at least $(\delta/3)n$ disjoint subsets of $[n]$ of size $3$ each such that for any hyperedge $\{a,b,c\} \in \mc{H}_i$, we have that $v_i = v_a + v_b + v_c$ (see~\cref{subsec:lccs}). For simplicity, suppose that $\delta \ge \Omega(1)$ to ignore any $\delta$ dependencies. Our goal is to show that every $x \in \F_2^k$ can be represented as the sum of at most $B$ vectors in $\{v_1, \ldots, v_n\}$ for some $B \coloneqq \Theta(\log{n}\log\log{n})$.

Since the $v_i$'s span $\F_2^k$, $x$ can be written as the sum of at most $k$ of the $v_i$'s. Fix any such sum. Our proof proceeds in an iterative fashion: whenever the current representation of $x$ as a sum of the $v_i$'s is longer than $B$, we will exploit the many local checks expressing each $v_i$ as the sum of many disjoint $3$-tuples of other $v_j$'s to produce a shorter representation of $x$. Applying this compression iteratively yields the desired conclusion.

Now, consider an arbitrary linear combination $\sum_{t \in T}{v_t}$ with $\abs{T} > B$. For any $t \in T$, we can locally modify $\sum_{t \in T}{v_t}$ by applying the substitution $v_t = v_a + v_b + v_c$ for any $\{a,b,c\} \in \mathcal{H}_t$.  This will increase the length of the sum by (at most) $2$, which defeats our initial goal. Nonetheless, since $\abs{\mc{H}_t} \ge \Omega(n)$ for each $t \in T$, the abundance of choices for the triple $\{a,b,c\} \in \mc{H}_t$ presents a possibility for producing cancellations between substituted sums of triples of vectors.

The simplest form of such a cancellation between two substitutions goes as follows: consider any two distinct indices $t_1, t_2 \in T$ such that there are triples $\{a_1, b_1, c_1\} \in \mathcal{H}_{t_1}$ and $\{a_2, b_2, c_2\} \in \mathcal{H}_{t_2}$ satisfying $c_1 = b_2$. Since each hypergraph is a matching of size $\Omega(n)$, such triples do occur whenever $\abs{T} = \omega(1)$. Now, by applying the substitutions $v_{t_1} = v_{a_1} + v_{b_1} + v_{c_1}$ and $v_{t_2} = v_{a_2} + v_{b_2} + v_{c_2}$ in $\sum_{t \in T}{v_t}$, we obtain a new sum of length at most $\abs{T} + 2 \cdot 2  - 2 \cdot 1 = \abs{T}+2$ due to $v_{c_1}$ and $v_{b_2}$ canceling each other out.

We can further generalize this form of cancellation to multiple indices as follows: given distinct indices $t_1, \ldots , t_m \in T$ such that there exists a ``path" of hyperedges $E_s \coloneqq \{a_{E_s}, b_{E_s}, c_{E_s}\} \in \mathcal{H}_{t_s}$ for $s \in [m]$ satisfying $c_{E_s} = b_{E_{s+1}}$ for each $s \in [m-1]$, we can apply the substitutions $v_{t_s} = v_{a_{E_s}} + v_{b_{E_s}} + v_{c_{E_s}}$ for each $s \in [m]$ to the sum $\sum_{t \in T}{v_t}$ and obtain a new sum of length at most $\abs{T} + 2m - 2(m-1) = \abs{T}+2$ due to $v_{c_{E_s}}$ and $v_{b_{E_{s+1}}}$ canceling each other out for each $s \in [m-1]$. Thus the length of the new sum hardly deviates from the length of the original sum. Furthermore, by a simple counting argument, one can show that there are such ``paths" of length $m = \Omega(\abs{T})$. However, the length of this new sum is not smaller or even equal to the length of the original sum.

Now, notice that if we had $c_{E_m} = b_{E_1}$ (i.e., the path `loops back'), then the length of the new sum will now be at most $\abs{T}$. This does not reduce the length of the original sum $\sum_{t \in T}{v_t}$, but it does `shift' it to a new sum. In the sequel, we will exploit such `shifts' to produce a new sum of smaller length. For now, let us consider the feasibility of having $c_{E_m} = b_{E_1}$.

To do so, we will cast our problem in the language of \emph{properly edge-colored graphs} and \emph{rainbow cycles}. Indeed, consider the edge-colored graph $G_T$ with vertices $[n]$ and edges $\{b, c\}$ for $\{a,b,c\} \in \mathcal{H}_t$ (dropping an arbitrary vertex in each triple) forming the $t$'th color class of edges in $G_T$ for each $t \in T$. As $\mathcal{H}_t$ is a matching, $G_T$ will therefore be a properly edge-colored graph. In this viewpoint, the `path' of hyperedges $E_1, \ldots , E_m$ is in fact a \emph{rainbow path} in $G_T$ with edge colors $t_1, \ldots , t_m$ in that order. To have $c_{E_m} = b_{E_1}$, we need this rainbow path to be a \emph{rainbow cycle}. Since the average degree of $G_T$ equals $\Omega(|T|) = \Omega(B) = \Omega(\log n \log \log n)$, we can therefore conclude the existence of a rainbow cycle in $G_T$ by the recent breakthrough result of \cite{ABSZZ23}. This rainbow cycle gives an alternate representation $\sum_{t \in T'}{v_t}$ that equals $\sum_{t \in T}{v_t}$ with $\abs{T'} \le \abs{T}$. Call such an $T'$ a ``shift" of $T$. Since the hypergraphs $\{\mc{H}_t\}_{t \in T}$ are matchings of size $\Omega(n)$, we can in fact extract more from this argument. Specifically, by a more careful selection of the edges of $G_T$, we can show that the collection of all ``shifts" $T'$ of $T$ cover $\Omega(n)$ of the indices in $[n]$. This is the content of~\cref{lem:shift-representation}.

This now suffices for an actual compression of a somewhat larger sum. Suppose $x = \sum_{i \in I}{v_i}$ for $\abs{I} >  p \cdot (B+1)$ for some large enough constant $p$ (which will depend on $\delta$). Splitting the sum into $p$ disjoint parts $T_1, T_2,\dots, T_p$, each with more than $B$ terms, the constant fraction `coverage' of $[n]$ by the ``shifts" of each set $T_\ell$ means (by some simple pigeonholing) that we can find two distinct indices $\ell_1, \ell_2 \in [p]$ and ``shifts" $T_{\ell_1}'$ and $T_{\ell_2}'$ that intersect. By replacing the sets $T_{\ell_1}$ and $T_{\ell_2}$ with their respective ``shifts," we end up with a representation of $x$ with at most $\abs{I}-2$ of the $v_i$'s, which concludes our iterative compression argument. See Figure~\ref{fig:cancellation} for an illustration.

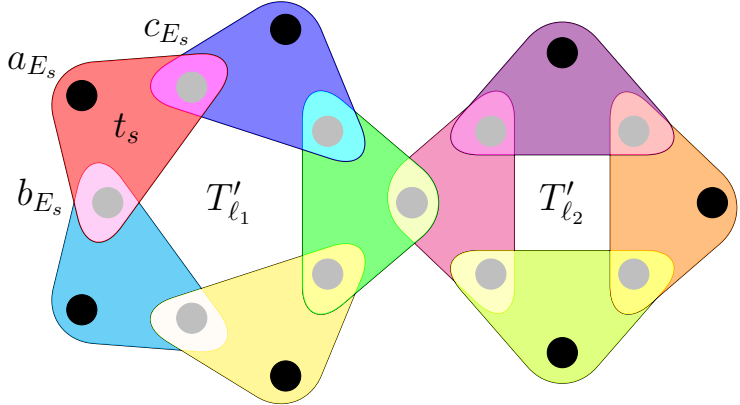
\begin{figure}
\begin{center}
\begin{tikzpicture}[scale=1.9]
\def\nodesize{3}
\def\shadelevel{50}
\def\activenodecolor{black}
\def\deadnodecolor{lightgray}

% Pentagon
\def\pshift{0}
\def\p{1/1.17557}
\def\ph{1.5}
\def\ptri{1.9}
\def\proundedness{7}

% Square
\def\sshift{2.33}
\def\s{1}
\def\sh{2.1}
\def\stri{1.9}
\def\sroundedness{7}

\begin{scope}[blend group=overlay]

% Pentagon triangles

\draw [rounded corners= \proundedness mm,fill=red!\shadelevel] (\pshift + 0.3333*-0.809*\p*\ph + 0.3333*-1*\p + 0.3333*-0.309*\p + 0.6667*-0.809*\p*\ph*\ptri + -0.3333*-1*\p*\ptri + -0.3333*-0.309*\p*\ptri, 0.3333*0.5878*\p*\ph + 0.3333*0*\p + 0.3333*0.951*\p + 0.6667*0.5878*\p*\ph*\ptri + -0.3333*0*\p*\ptri + -0.3333*0.951*\p*\ptri)--(\pshift + 0.3333*-0.809*\p*\ph + 0.3333*-1*\p + 0.3333*-0.309*\p + -0.3333*-0.809*\p*\ph*\ptri + 0.6667*-1*\p*\ptri + -0.3333*-0.309*\p*\ptri, 0.3333*0.5878*\p*\ph + 0.3333*0*\p + 0.3333*0.951*\p + -0.3333*0.5878*\p*\ph*\ptri + 0.6667*0*\p*\ptri + -0.3333*0.951*\p*\ptri)--(\pshift + 0.3333*-0.809*\p*\ph + 0.3333*-1*\p + 0.3333*-0.309*\p + -0.3333*-0.809*\p*\ph*\ptri + -0.3333*-1*\p*\ptri + 0.6667*-0.309*\p*\ptri, 0.3333*0.5878*\p*\ph + 0.3333*0*\p + 0.3333*0.951*\p + -0.3333*0.5878*\p*\ph*\ptri + -0.3333*0*\p*\ptri + 0.6667*0.951*\p*\ptri)--cycle;

\draw [rounded corners= \proundedness mm,fill=blue!\shadelevel] (\pshift + 0.3333*0.309*\p*\ph + 0.3333*-0.309*\p + 0.3333*0.809*\p + 0.6667*0.309*\p*\ph*\ptri + -0.3333*-0.309*\p*\ptri + -0.3333*0.809*\p*\ptri, 0.3333*0.951*\p*\ph + 0.3333*0.951*\p + 0.3333*0.5878*\p + 0.6667*0.951*\p*\ph*\ptri + -0.3333*0.951*\p*\ptri + -0.3333*0.5878*\p*\ptri)--(\pshift + 0.3333*0.309*\p*\ph + 0.3333*-0.309*\p + 0.3333*0.809*\p + -0.3333*0.309*\p*\ph*\ptri + 0.6667*-0.309*\p*\ptri + -0.3333*0.809*\p*\ptri, 0.3333*0.951*\p*\ph + 0.3333*0.951*\p + 0.3333*0.5878*\p + -0.3333*0.951*\p*\ph*\ptri + 0.6667*0.951*\p*\ptri + -0.3333*0.5878*\p*\ptri)--(\pshift + 0.3333*0.309*\p*\ph + 0.3333*-0.309*\p + 0.3333*0.809*\p + -0.3333*0.309*\p*\ph*\ptri + -0.3333*-0.309*\p*\ptri + 0.6667*0.809*\p*\ptri, 0.3333*0.951*\p*\ph + 0.3333*0.5878*\p + 0.3333*0.951*\p + -0.3333*0.951*\p*\ph*\ptri + -0.3333*0.951*\p*\ptri + 0.6667*0.5878*\p*\ptri)--cycle;

\draw [rounded corners= \proundedness mm,fill=green!\shadelevel] (\pshift + 0.3333*1*\p*\ph + 0.3333*0.809*\p + 0.3333*0.809*\p + 0.6667*1*\p*\ph*\ptri + -0.3333*0.809*\p*\ptri + -0.3333*0.809*\p*\ptri, 0.3333*0*\p*\ph + 0.3333*0.5878*\p + 0.3333*-0.5878*\p + 0.6667*0*\p*\ph*\ptri + -0.3333*0.5878*\p*\ptri + -0.3333*-0.5878*\p*\ptri)--(\pshift + 0.3333*1*\p*\ph + 0.3333*0.809*\p + 0.3333*0.809*\p + -0.3333*1*\p*\ph*\ptri + 0.6667*0.809*\p*\ptri + -0.3333*0.809*\p*\ptri, 0.3333*0*\p*\ph + 0.3333*0.5878*\p + 0.3333*-0.5878*\p + -0.3333*0*\p*\ph*\ptri + 0.6667*0.5878*\p*\ptri + -0.3333*-0.5878*\p*\ptri)--(\pshift + 0.3333*1*\p*\ph + 0.3333*0.809*\p + 0.3333*0.809*\p + -0.3333*1*\p*\ph*\ptri + -0.3333*0.809*\p*\ptri + 0.6667*0.809*\p*\ptri, 0.3333*0*\p*\ph + 0.3333*0.5878*\p + 0.3333*-0.5878*\p + -0.3333*0*\p*\ph*\ptri + -0.3333*0.5878*\p*\ptri + 0.6667*-0.5878*\p*\ptri)--cycle;

\draw [rounded corners= \proundedness mm,fill=yellow!\shadelevel] (\pshift + 0.3333*0.309*\p*\ph + 0.3333*0.809*\p + 0.3333*-0.309*\p + 0.6667*0.309*\p*\ph*\ptri + -0.3333*0.809*\p*\ptri + -0.3333*-0.309*\p*\ptri, 0.3333*-0.951*\p*\ph + 0.3333*-0.5878*\p + 0.3333*-0.951*\p + 0.6667*-0.951*\p*\ph*\ptri + -0.3333*-0.5878*\p*\ptri + -0.3333*-0.951*\p*\ptri)--(\pshift + 0.3333*0.309*\p*\ph + 0.3333*0.809*\p + 0.3333*-0.309*\p + -0.3333*0.309*\p*\ph*\ptri + 0.6667*0.809*\p*\ptri + -0.3333*-0.309*\p*\ptri, 0.3333*-0.951*\p*\ph + 0.3333*-0.5878*\p + 0.3333*-0.951*\p + -0.3333*-0.951*\p*\ph*\ptri + 0.6667*-0.5878*\p*\ptri + -0.3333*-0.951*\p*\ptri)--(\pshift + 0.3333*0.309*\p*\ph + 0.3333*0.809*\p + 0.3333*-0.309*\p + -0.3333*0.309*\p*\ph*\ptri + -0.3333*0.809*\p*\ptri + 0.6667*-0.309*\p*\ptri, 0.3333*-0.951*\p*\ph + 0.3333*-0.5878*\p + 0.3333*-0.951*\p + -0.3333*-0.951*\p*\ph*\ptri + -0.3333*-0.5878*\p*\ptri + 0.6667*-0.951*\p*\ptri)--cycle;

\draw [rounded corners= \proundedness mm,fill=cyan!\shadelevel] (\pshift + 0.3333*-0.809*\p*\ph + 0.3333*-0.309*\p + 0.3333*-1*\p + 0.6667*-0.809*\p*\ph*\ptri + -0.3333*-0.309*\p*\ptri + -0.3333*-1*\p*\ptri, 0.3333*-0.5878*\p*\ph + 0.3333*-0.951*\p + 0.3333*0*\p + 0.6667*-0.5878*\p*\ph*\ptri + -0.3333*-0.951*\p*\ptri + -0.3333*0*\p*\ptri)--(\pshift + 0.3333*-0.809*\p*\ph + 0.3333*-0.309*\p + 0.3333*-1*\p + -0.3333*-0.809*\p*\ph*\ptri + 0.6667*-0.309*\p*\ptri + -0.3333*-1*\p*\ptri, 0.3333*-0.5878*\p*\ph + 0.3333*-0.951*\p + 0.3333*0*\p + -0.3333*-0.5878*\p*\ph*\ptri + 0.6667*-0.951*\p*\ptri + -0.3333*0*\p*\ptri)--(\pshift + 0.3333*-0.809*\p*\ph + 0.3333*-0.309*\p + 0.3333*-1*\p + -0.3333*-0.809*\p*\ph*\ptri + -0.3333*-0.309*\p*\ptri + 0.6667*-1*\p*\ptri, 0.3333*-0.5878*\p*\ph + 0.3333*-0.951*\p + 0.3333*0*\p + -0.3333*-0.5878*\p*\ph*\ptri + -0.3333*-0.951*\p*\ptri + 0.6667*0*\p*\ptri)--cycle;

% Square triangles

\draw [rounded corners= \sroundedness mm,fill=violet!\shadelevel] (\sshift + 0.3333*0*\s*\sh + 0.3333*-0.5*\s + 0.3333*0.5*\s + 0.6667*0*\s*\sh*\stri + -0.3333*-0.5*\s*\stri + -0.3333*0.5*\s*\stri, 0.3333*0.5*\s*\sh + 0.3333*0.5*\s + 0.3333*0.5*\s + 0.6667*0.5*\s*\sh*\stri + -0.3333*0.5*\s*\stri + -0.3333*0.5*\p*\stri)--(\sshift + 0.3333*0*\s*\sh + 0.3333*-0.5*\s + 0.3333*0.5*\s + -0.3333*0*\s*\sh*\stri + 0.6667*-0.5*\s*\stri + -0.3333*0.5*\s*\stri, 0.3333*0.5*\s*\sh + 0.3333*0.5*\s + 0.3333*0.5*\s + -0.3333*0.5*\s*\sh*\stri + 0.6667*0.5*\s*\stri + -0.3333*0.5*\s*\stri)--(\sshift + 0.3333*0*\s*\sh + 0.3333*-0.5*\s + 0.3333*0.5*\s + -0.3333*0*\s*\sh*\stri + -0.3333*-0.5*\s*\stri + 0.6667*0.5*\s*\stri, 0.3333*0.5*\s*\sh + 0.3333*0.5*\s + 0.3333*0.5*\s + -0.3333*0.5*\s*\sh*\stri + -0.3333*0.5*\s*\stri + 0.6667*0.5*\s*\stri)--cycle;

\draw [rounded corners= \sroundedness mm,fill=orange!\shadelevel] (\sshift + 0.3333*0.5*\s*\sh + 0.3333*0.5*\s + 0.3333*0.5*\s + 0.6667*0.5*\s*\sh*\stri + -0.3333*0.5*\s*\stri + -0.3333*0.5*\s*\stri, 0.3333*0*\s*\sh + 0.3333*0.5*\s + 0.3333*-0.5*\s + 0.6667*0*\s*\sh*\stri + -0.3333*0.5*\s*\stri + -0.3333*-0.5*\p*\stri)--(\sshift + 0.3333*0.5*\s*\sh + 0.3333*0.5*\s + 0.3333*0.5*\s + -0.3333*0.5*\s*\sh*\stri + 0.6667*0.5*\s*\stri + -0.3333*0.5*\s*\stri, 0.3333*0*\s*\sh + 0.3333*0.5*\s + 0.3333*-0.5*\s + -0.3333*0*\s*\sh*\stri + 0.6667*0.5*\s*\stri + -0.3333*-0.5*\s*\stri)--(\sshift + 0.3333*0.5*\s*\sh + 0.3333*0.5*\s + 0.3333*0.5*\s + -0.3333*0.5*\s*\sh*\stri + -0.3333*0.5*\s*\stri + 0.6667*0.5*\s*\stri, 0.3333*0*\s*\sh + 0.3333*0.5*\s + 0.3333*-0.5*\s + -0.3333*0*\s*\sh*\stri + -0.3333*0.5*\s*\stri + 0.6667*-0.5*\s*\stri)--cycle;

\draw [rounded corners= \sroundedness mm,fill=lime!\shadelevel] (\sshift + 0.3333*0*\s*\sh + 0.3333*0.5*\s + 0.3333*-0.5*\s + 0.6667*0*\s*\sh*\stri + -0.3333*0.5*\s*\stri + -0.3333*-0.5*\s*\stri, 0.3333*-0.5*\s*\sh + 0.3333*-0.5*\s + 0.3333*-0.5*\s + 0.6667*-0.5*\s*\sh*\stri + -0.3333*-0.5*\s*\stri + -0.3333*-0.5*\p*\stri)--(\sshift + 0.3333*0*\s*\sh + 0.3333*0.5*\s + 0.3333*-0.5*\s + -0.3333*0*\s*\sh*\stri + 0.6667*0.5*\s*\stri + -0.3333*-0.5*\s*\stri, 0.3333*-0.5*\s*\sh + 0.3333*-0.5*\s + 0.3333*-0.5*\s + -0.3333*-0.5*\s*\sh*\stri + 0.6667*-0.5*\s*\stri + -0.3333*-0.5*\s*\stri)--(\sshift + 0.3333*0*\s*\sh + 0.3333*0.5*\s + 0.3333*-0.5*\s + -0.3333*0*\s*\sh*\stri + -0.3333*0.5*\s*\stri + 0.6667*-0.5*\s*\stri, 0.3333*-0.5*\s*\sh + 0.3333*-0.5*\s + 0.3333*-0.5*\s + -0.3333*-0.5*\s*\sh*\stri + -0.3333*-0.5*\s*\stri + 0.6667*-0.5*\s*\stri)--cycle;

\draw [rounded corners= \sroundedness mm,fill=magenta!\shadelevel] (\sshift + 0.3333*-0.5*\s*\sh + 0.3333*-0.5*\s + 0.3333*-0.5*\s + 0.6667*-0.5*\s*\sh*\stri + -0.3333*-0.5*\s*\stri + -0.3333*-0.5*\s*\stri, 0.3333*0*\s*\sh + 0.3333*-0.5*\s + 0.3333*0.5*\s + 0.6667*0*\s*\sh*\stri + -0.3333*-0.5*\s*\stri + -0.3333*0.5*\p*\stri)--(\sshift + 0.3333*-0.5*\s*\sh + 0.3333*-0.5*\s + 0.3333*-0.5*\s + -0.3333*-0.5*\s*\sh*\stri + 0.6667*-0.5*\s*\stri + -0.3333*-0.5*\s*\stri, 0.3333*0*\s*\sh + 0.3333*-0.5*\s + 0.3333*0.5*\s + -0.3333*0*\s*\sh*\stri + 0.6667*-0.5*\s*\stri + -0.3333*0.5*\s*\stri)--(\sshift + 0.3333*-0.5*\s*\sh + 0.3333*-0.5*\s + 0.3333*-0.5*\s + -0.3333*-0.5*\s*\sh*\stri + -0.3333*-0.5*\s*\stri + 0.6667*-0.5*\s*\stri, 0.3333*0*\s*\sh + 0.3333*-0.5*\s + 0.3333*0.5*\s + -0.3333*0*\s*\sh*\stri + -0.3333*-0.5*\s*\stri + 0.6667*0.5*\s*\stri)--cycle;
\end{scope}

% Pentagon nodes
\filldraw[\deadnodecolor] (\pshift + -1*\p, 0*\p) circle (\nodesize pt);
\filldraw[\deadnodecolor] (\pshift + -0.309*\p, 0.951*\p) circle (\nodesize pt);
\filldraw[\deadnodecolor] (\pshift + 0.809*\p, 0.5878*\p) circle (\nodesize pt);
\filldraw[\deadnodecolor] (\pshift + 0.809*\p, -0.5878*\p) circle (\nodesize pt);
\filldraw[\deadnodecolor] (\pshift + -0.309*\p, -0.951*\p) circle (\nodesize pt);

% Pentagon heads
\filldraw[\activenodecolor] (\pshift + -0.809*\p*\ph, 0.5878*\p*\ph) circle (\nodesize pt);
\filldraw[\activenodecolor] (\pshift + 0.309*\p*\ph, 0.951*\p*\ph) circle (\nodesize pt);
\filldraw[\deadnodecolor] (\pshift + 1*\p*\ph, 0*\p*\ph) circle (\nodesize pt);
\filldraw[\activenodecolor] (\pshift + 0.309*\p*\ph, -0.951*\p*\ph) circle (\nodesize pt);
\filldraw[\activenodecolor] (\pshift + -0.809*\p*\ph, -0.5878*\p*\ph) circle (\nodesize pt);

% Square nodes
\filldraw[\deadnodecolor] (\sshift + -0.5*\s, 0.5*\s) circle (\nodesize pt);
\filldraw[\deadnodecolor] (\sshift + 0.5*\s, 0.5*\s) circle (\nodesize pt);
\filldraw[\deadnodecolor] (\sshift + 0.5*\s, -0.5*\s) circle (\nodesize pt);
\filldraw[\deadnodecolor] (\sshift + -0.5*\s, -0.5*\s) circle (\nodesize pt);

% Square heads
\filldraw[\activenodecolor] (\sshift + 0*\s*\sh, 0.5*\s*\sh) circle (\nodesize pt);
\filldraw[\activenodecolor] (\sshift + 0.5*\s*\sh, 0*\s*\sh) circle (\nodesize pt);
\filldraw[\activenodecolor] (\sshift + 0*\s*\sh, -0.5*\s*\sh) circle (\nodesize pt);
\filldraw[\deadnodecolor] (\sshift + -0.5*\s*\sh, 0*\s*\sh) circle (\nodesize pt);

% Labels
\node at (\pshift + -0.809*\p*\ph - 0.40*\p, 0.5878*\p*\ph + 0.25*\p) {\Large $a_{E_s}$};
\node at (\pshift + -1*\p - 0.55*\p, 0*\p + 0.05*\p) {\Large $b_{E_s}$};
\node at (\pshift + -0.309*\p - 0.2*\p, 0.951*\p + 0.45*\p) {\Large $c_{E_s}$};
\node at (\pshift + 0.3333*-0.809*\p*\ph + 0.3333*-1*\p + 0.3333*-0.309*\p, 0.3333*0.5878*\p*\ph + 0.3333*0*\p + 0.3333*0.951*\p) {\Large $t_s$};
\node at (\pshift, 0) {\Large $T_{\ell_1}'$};
\node at (\sshift, 0) {\Large $T_{\ell_2}'$};

\end{tikzpicture}
\end{center}
\caption{This figure indicates the cancellations that occur in our proof of~\cref{thm:intro-cov-rad} via iterative refinement of the representation of an arbitrary vector $x \in \F_2^k$ as a sum more than $p(B+1) = \Omega(\log{n}\log\log{n})$ of the $v_i$'s. The nodes represent indices in $[n]$, with the gray nodes indicating `canceled' nodes in the sum $\sum_{i \in I}{v_i}$, while the black nodes represent the `active' nodes in the sum. The inner gray nodes in the pentagon and the square are cancellations resulting from~\cref{lem:shift-representation}. The cancellation of the one outer gray node in common is the result of picking a common node between two `shifts' $T_{\ell_1}'$ and $T_{\ell_2}'$ of the sets $T_{\ell_1}$ and $T_{\ell_2}$, which is key idea in the proof of~\cref{thm:intro-cov-rad} from~\cref{lem:shift-representation}. In the figure, a sum of $9$ terms (the indices $t_s$ corresponding to each of the $9$ colors) is compressed into a sum of $7$ terms (the black nodes).}
\label{fig:cancellation}
\end{figure}

\paragraph{Proof comparison to~\cite{KM23, Yan24}.}
One salient common feature in our work and the works of~\cite{KM23, Yan24} is the \emph{chaining} of local checks. However, our implementation of chaining differs fundamentally from~\cite{KM23, Yan24}. In our work, we attempt to chain local checks to form a ``cyclical chain" (i.e., rainbow cycles) in order to establish~\cref{thm:intro-cov-rad}, resulting in a much shorter proof. On the other hand,~\cite{KM23, Yan24} consider a technically involved hypergraph decomposition of a superpolynomial number of chained local checks and then proceed to undertake a highly intricate ``row pruning" analysis to ensure that each hypergraph of chained local checks is ``spread-out." 
Admittedly, our proof relies on black-boxing known results from the rainbow cycle literature, some proofs of which are involved. Nonetheless, our proof offers modularity. In particular, any improvement to the result of~\cite{ABSZZ23} would immediately yield better lower bounds on binary linear $3$-LCC via our proof of~\cref{thm:intro-cov-rad}. On the other hand, improvements using the methods of~\cite{KM23, Yan24} would likely entail a re-do of their analysis (as was the case in~\cite{Yan24}).

\subsection{Organization}

In~\cref{sec:prelims}, we state the tools we need for locally correctable codes and edge-colored graphs. In~\cref{sec:binary-3-lcc-proof}, we present the proof of~\cref{thm:main} and~\cref{thm:intro-cov-rad}. In \cref{sec:rainbow-ldc-higher-query}, we define the notion of a ``rainbow'' LDC lower bound along with a generalization of~\cref{thm:main} and use them to prove~\cref{thm:odd-lcc-lbs}. Finally, in~\cref{sec:new-2-ldc-proof}, we present a covering radius upper bound for linear $2$-LDCs and discuss how to obtain the exponential blocklength lower bound from our proof.

\section{Preliminaries}
\label{sec:prelims}

Let $\N \coloneqq \{0, 1, 2, \ldots\}$, and let $\F_2 = \{0,1\}$ denote the finite field of size $2$. For any positive integer $n \in \Z_+$, we denote $[n] \coloneqq \{1, 2, \ldots , n\}$. For any set $X$ and number $k \in \N$, denote $\binom{X}{k} \coloneqq \{ A \mid A \subseteq X \ , \ \abs{A} = k\}$. Given two sets $A$ and $B$, let $A \oplus B \coloneqq (A \setminus B) \cup (B \setminus A)$ denote their symmetric difference. Given a vector $x \in \F_2^k$, let $\text{wt}(x)$ denote its Hamming weight (i.e., number of nonzero entries). For any two vectors $x,y \in \F_2^n$, let $d(x,y)$ denote their Hamming distance (i.e., the number of entries that they differ on). We will consider \emph{multi-sets} in this work, which are simply sets that allow elements to repeat. For any multi-set $A$, the cardinality of $A$, denoted $\abs{A}$, is the number of elements in $A$ (including repeated elements).

A \emph{hypergraph} is simply a collection of sets $\mathcal{H} \subseteq 2^{[n]}$. We call the sets in the hypergraph \emph{hyperedges} For any $\ell \in \Z_+$, we say that $\mathcal{H}$ is an $\ell$-uniform hypergraph if $\abs{A} = \ell$ for all $A \in \mathcal{H}$. We also say that $\mathcal{H}$ is a \emph{matching} if $A \cap B = \varnothing$ for all distinct $A,B \in \mathcal{H}$. If $\mathcal{H}$ is an $\ell$-uniform hypergraph and a matching, then we simply call it an \emph{$\ell$-uniform matching.}

\subsection{Locally correctable codes}
\label{subsec:lccs}

The following is the usual definition of a linear 3-query locally correctable code $C$ as having a local decoder.

\begin{defn}[Binary Linear LCC, local decoder definition]
\label{def:lcc-local-dec}
Given a binary linear code $C \subseteq \F_2^n$, we say that it is a $(r, \delta)$-locally correctable code (abbreviated $(r,\delta)$-LCC) for $r \in \N$ and $\delta \in (0,1)$ if the following holds: for any received codeword $y \in \F_2^n$ there exists a randomized algorithm $\mc{D}^y$ with oracle access to $y$ that takes an index $i \in [n]$ as input and satisfies the following properties: (1) $\mc{D}^y(i)$ makes at most $r$ queries to $y$, and (2) if there exists a codeword $c \in C$ satisfying $d(x,c) \le \delta n$, then $\mc{D}^y(i)$ outputs $c_i$ with probability at least $2/3$.
\end{defn}

While \cref{def:lcc-local-dec} is the typical definition of LCCs, we will instead be working with a more combinatorial definition that is amenable to lower bounds.

\begin{defn}[Binary Linear LCC, combinatorial definition]
\label{def:lcc-combi}
Given a linear code $C$ with generator matrix $M \in \F_2^{n \times k}$ whose columns form a basis for $C$, let $v_i \in \F_q^k$ be the $i$'th row of $M$ for $i \in [n]$. The code $C$ is said to be a $(r, \delta)$-locally correctable code (abbreviated $(r, \delta)$-LCC) for $r \in \N$ and $\delta \in (0,1)$ if there exists $r$-uniform matchings $\mathcal{H}_1, \ldots , \mathcal{H}_n$ over $[n]$ such that $\abs{\mathcal{H}_i} \ge \delta n$ for all $i \in [n]$, and for any $i \in [n]$ and $\{a_1, \ldots , a_r\} \in \mathcal{H}_i$, we have that $v_i = \sum_{s=1}^r{v_{a_s}}$.
\end{defn}

It is well-known from standard reductions~\cite{KT00, Yek12, DSW14} that any code satisfying \cref{def:lcc-local-dec} also satisfies \cref{def:lcc-combi} for a multiplicative loss of $1/r$ in $\delta$. Therefore, without loss of generality. we will assume throughout the paper that the notion of a binary linear $(r,\delta)$-LCC refers to~\cref{def:lcc-combi} rather than~\cref{def:lcc-local-dec}.

\begin{remark}
The definition of a linear $(r, \delta)$-LCC in~\cref{def:lcc-combi} is invariant of the choice of generator matrix $M$ for the code $C$. Indeed, any generator matrix for $C$ is of the form  $MB$ for some invertible matrix $B \in \F_q^{k \times k}$. The rows of $MB$ are $B^\top v_i$ for $i \in [n]$. By linearity, it therefore follows that $B^\top v_i = \sum_{s=1}^r{B^\top v_{a_s}}$ for any $i \in [n]$ and $\{a_1, \ldots , a_r\} \in \mc{H}_i$.
\end{remark}

\subsection{Edge-colored graphs}
\label{subsec:edge-colored-graphs}

An undirected graph $G = (V,E)$ consists of a set $V$ and a multi-set $E \subseteq \binom{V}{2}$.\footnote{Note that $G$ does not necessarily have to be simple. That is, edges are allowed to repeat.} Given two edges $e_1, e_2 \in E$, we say that $e_1$ is \emph{incident} to $e_2$ if they share a common vertex. A subset of edges $E_0 \subseteq E$ is said to be a \emph{matching} if no two different edges in $E_0$ are incident to each other. Given a set of colors $T$, we say that a graph $G$ is \emph{edge-colored} if it has an associated function $c : E \to T$, which we call an edge coloring. For graphs with an associated edge coloring, we write them as $G = (V, E, c)$. Given a color $t \in T$, the \emph{color class} of $t$ of $G$ is the multi-set of edges $c^{-1}(t)$. We say that $c$ is a \emph{proper} edge coloring if any two different incident edges $e_1, e_2 \in E$ have different colors. Equivalently, $c$ is a proper edge coloring if $c^{-1}(t)$ is a matching for all $t \in T$. 

With all this terminology at hand, we can now define a rainbow cycle.

\begin{defn}[Rainbow Cycle]
Given an edge-colored graph $G = (V,E,c)$, a rainbow cycle is a tuple of vertices $(i_1, i_2, \ldots , i_\ell, i_{\ell+1} = i_1) \in V^\ell$ such that $\{i_j, i_{j+1}\} \in E$ for all $j \in [\ell]$ and the multi-set of edges $\{\{i_j, i_{j+1}\} : j \in [\ell]\}$ is each assigned a different color by $c$. 
\end{defn}

We will now rely on the following theorem of~\cite{ABSZZ23}. Note that when the graph is not simple, one can easily find a rainbow cycle of length $2$ in the graph (as it is properly edge-colored).

\begin{theorem}[\cite{ABSZZ23}, Theorem 1.1]
\label{thm:abszz}
There exists a universal constant $c_0 > 0$ such that the following holds: any properly edge-colored $n$-vertex graph $G$ with at least $c_0n\log{n}\log\log{n}$ edges contains a rainbow cycle.
\end{theorem}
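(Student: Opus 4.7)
The plan is to prove the theorem by showing that any properly edge-colored $n$-vertex graph $G = (V, E, c)$ with no rainbow cycle satisfies $|E| = O(n \log n \log \log n)$; choosing $c_0$ larger than the hidden constant yields the theorem.

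\textbf{Step 1 (Regularization).} I would first pass to a subgraph $G' \subseteq G$ of minimum degree $d \ge c_0 \log n \log \log n / 2$ by iteratively deleting vertices of degree below this threshold. This preserves the proper edge-coloring and at least half of the edges, and $G'$ still contains no rainbow cycle. So it suffices to bound $d$ by $O(\log n \log \log n)$ under the no-rainbow-cycle hypothesis.

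\textbf{Step 2 (Rainbow BFS setup).} Fixing a root $r \in V(G')$, I would iteratively grow the \emph{rainbow ball} $R_i(r)$ consisting of all vertices reachable from $r$ by some rainbow path of length at most $i$, together with a canonical choice $P_v$ of such a path for each $v \in R_i(r)$. The no-rainbow-cycle hypothesis translates into the following key constraint: for any edge $\{v,w\}$ of color $\gamma$ with $v, w \in R_i(r)$ and $\gamma$ absent from the colors of both $P_v$ and $P_w$, the color sets of $P_v$ and $P_w$ must overlap, since otherwise the closed walk from $r$ along $P_v$, across $\{v,w\}$, and back along the reverse of $P_w$ is rainbow and shortcuts to a rainbow cycle.

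\textbf{Step 3 (Expansion lemma).} The central technical claim is a multiplicative expansion bound of the form $|R_{i+1}(r)| \ge |R_i(r)| \cdot d / (C \log n)$ for some absolute constant $C$, valid while $|R_i(r)| \le n/2$. Intuitively, each boundary vertex $v$ has $d$ incident edges of pairwise distinct colors, of which at most $i \le O(\log n)$ are forbidden by appearing on $P_v$; among the remaining $\ge d - O(\log n)$ extending edges, the constraint from Step 2 forces only a controlled fraction to land back inside $R_i(r)$. Iterating for $O(\log n / \log \log n)$ rounds saturates $R_i(r) = V(G')$, after which any unused edge must close a rainbow cycle, yielding the contradiction.

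The main obstacle, and the source of the $\log \log n$ factor, is proving the expansion lemma with the sharp multiplicative $d / \log n$ factor rather than the weaker $d / \log^2 n$ factor achievable by earlier works such as~\cite{JS22, KLLT22}. A naive union bound over the possible color coincidences from Step 2 (one per already-visited vertex) loses an extra $\log n$ factor, which is precisely where the innovation of~\cite{ABSZZ23} is needed: one avoids the loss by averaging over multiple canonical rainbow trees (or by employing an entropy-based argument) and by exploiting the fact that each color class is a matching to bound the number of color-coincidence events \emph{globally} across all boundary vertices rather than vertex-by-vertex. I would expect the bulk of the technical effort to go into formalizing this amortization, likely via a carefully designed potential function that tracks how the canonical rainbow paths $P_v$ are updated across BFS iterations and controls the number of ``forbidden'' configurations they generate.
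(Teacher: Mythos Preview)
The paper does not prove this theorem. Theorem~\ref{thm:abszz} is stated in the preliminaries as a citation of \cite[Theorem~1.1]{ABSZZ23} and is invoked purely as a black box in the proofs of Lemma~\ref{lem:shift-representation} and Proposition~\ref{prop:even-rainbow-ubs}. There is therefore nothing in the present paper to compare your proposal against; you have attempted to sketch the proof of a result the paper only quotes.

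On your sketch itself: the broad shape (pass to a min-degree subgraph, grow rainbow balls, argue expansion) is in the spirit of the earlier works you cite, but Step~3 is where all the content lives and you have essentially restated the problem rather than solved it. Two concrete issues. First, the arithmetic is off: with expansion factor $d/(C\log n) \approx \log\log n$ you need on the order of $\log n / \log\log\log n$ rounds, not $\log n / \log\log n$, to reach $n$ vertices. Second, the mechanism you gesture at (``averaging over multiple canonical rainbow trees'' or an unspecified potential function) does not identify a workable amortization; the actual argument in \cite{ABSZZ23} proceeds via a robust sublinear expander framework together with a random-sampling construction of many rainbow paths in parallel, which is structurally different from maintaining a single canonical path $P_v$ per vertex. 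As written, your Step~3 could not be completed without importing essentially the full machinery of \cite{ABSZZ23}.
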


\section{Proof of main 3-LCC result}
\label{sec:binary-3-lcc-proof}

Let $C$ be an $[n,k]$ binary linear $(3, \delta)$-LCC. Throughout this section, fix a generator matrix $M \in \F_2^{n \times k}$ for $C$ with row vectors $v_1, \ldots , v_n \in \F_2^k$ and associated $3$-uniform matchings $\mathcal{H}_1, \ldots , \mathcal{H}_n$ over $[n]$. Our main result for this section is the following theorem, which is just Theorem~\ref{thm:intro-cov-rad} restated.

\begin{theorem}
\label{thm:compression}
For any vector $x \in \F_2^k$, there exists a set of indices $I \subseteq [n]$ satisfying $x = \sum_{i \in I}{v_i}$ and  $\abs{I} \le O(\delta^{-2}\log{n}\log\log{n})$.
\end{theorem}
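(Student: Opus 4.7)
The plan is to prove the theorem by an iterative compression argument. Start with any representation $x = \sum_{i \in I} v_i$ (which exists because the $v_i$'s span $\F_2^k$, so we can take $|I| \leq k$). We will show that whenever $|I|$ exceeds the target bound $B_\delta \coloneqq C\, \delta^{-2}\log n \log\log n$ for a suitable constant $C$, we can produce a strictly shorter representation. Iterating terminates with $|I| \leq B_\delta$, proving the theorem.

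The core of the argument is a \emph{shift lemma}: for any $T \subseteq [n]$ with $|T| > B \coloneqq C'\,\delta^{-1}\log n \log\log n$, there exists a family $\mathcal{S}(T)$ of ``shifts'' $T' \subseteq [n]$ satisfying $\sum_{t \in T'} v_t = \sum_{t \in T} v_t$ and $|T'| \leq |T|$, such that $\bigcup_{T' \in \mathcal{S}(T)} T'$ covers at least $\Omega(\delta n)$ indices of $[n]$. To prove this, build for each choice of ``drop map'' $\phi$ (which picks one vertex $\phi(\{a,b,c\})$ to drop from each triple of each $\mathcal{H}_t$, $t \in T$) a properly edge-colored graph $G_T^\phi$ on vertex set $[n]$: for each triple $\{a,b,c\} \in \mathcal{H}_t$, include the edge between the two surviving vertices, colored $t$. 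Because each $\mathcal{H}_t$ is a matching of size $\geq \delta n$, each color class is a matching of $\geq \delta n$ edges and the total edge count is $\geq \delta n|T|$, giving average degree $2\delta|T| \geq c_0 \log n \log\log n$ once $|T| > B$. Theorem~\ref{thm:abszz} of \cite{ABSZZ23} then yields a rainbow cycle $(i_1,\dots,i_m,i_1)$, whose edges trace triples $\{a_s,b_s,c_s\} \in \mathcal{H}_{t_s}$ with $c_s = b_{s+1}$ for $s \in [m]$ (cyclically). Substituting $v_{t_s} = v_{a_s}+v_{b_s}+v_{c_s}$ into the sum $\sum_{t\in T} v_t$ makes all surviving vertices cancel in pairs along the cycle, leaving the shift $T' = (T \setminus \{t_1,\dots,t_m\}) \cup \{a_1,\dots,a_m\}$, where $a_s = \phi(\{a_s,b_s,c_s\})$ is the dropped vertex. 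Ranging over many choices of $\phi$ (or, equivalently, over many rainbow cycles in a single sufficiently dense $G_T$), the dropped vertices sweep out $\Omega(\delta n)$ distinct indices of $[n]$; this is the coverage conclusion of the shift lemma.

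Given the shift lemma, the compression step goes by pigeonhole. Whenever $|I| > p(B+1)$ for $p \coloneqq \lceil 2/(c_1\delta) \rceil$ (where $c_1$ is the hidden constant in the $\Omega(\delta n)$ coverage bound), partition $I$ into $p$ disjoint blocks $T_1,\dots,T_p$, each of size exceeding $B$. By the shift lemma, each $T_\ell$ has a shift family covering $\geq c_1 \delta n$ indices of $[n]$; the total coverage across the $p$ blocks is $\geq p c_1 \delta n \geq 2n$, so by pigeonhole some index $i^* \in [n]$ is covered by shifts $T_{\ell_1}'$ and $T_{\ell_2}'$ of two distinct blocks $T_{\ell_1}, T_{\ell_2}$. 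Replacing $T_{\ell_1}$ and $T_{\ell_2}$ inside $I$ by $T_{\ell_1}'$ and $T_{\ell_2}'$ preserves the sum (by the defining property of shifts) and turns $I$ into the symmetric difference $I \oplus T_{\ell_1} \oplus T_{\ell_1}' \oplus T_{\ell_2} \oplus T_{\ell_2}'$, which is smaller by at least $2$ since $i^*$ appears in both shifts and therefore cancels. Iterating until $|I| \leq p(B+1) = O(\delta^{-2}\log n \log\log n)$ completes the proof.

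The main technical obstacle is the coverage statement in the shift lemma: turning the existential rainbow cycle of \cite{ABSZZ23} into shifts whose \emph{union} of dropped vertices covers $\Omega(\delta n)$ indices of $[n]$. A single rainbow cycle produces only a few dropped vertices, so one must extract many essentially disjoint cycles (or redo the rainbow-cycle extraction in many distinct colored graphs built from different drop choices). I expect that the cleanest route is a double-counting argument: for a uniformly random drop map $\phi$, each triple's dropped vertex is a uniform element of that triple, and since the matchings $\mathcal{H}_t$ are large one can show that a constant fraction of $[n]$ is ``reachable'' as an $a_s$ in some rainbow cycle of some $G_T^\phi$. Carrying this out carefully, while keeping the dependence on $\delta$ tight, is the delicate step; all other parts are either elementary combinatorics or a direct invocation of \cref{thm:abszz}.
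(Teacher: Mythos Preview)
Your high-level structure matches the paper's proof exactly: a shift lemma (the paper's Lemma~\ref{lem:shift-representation}) giving $\Omega(\delta n)$ coverage, followed by the pigeonhole compression step you describe. Where you diverge is in the proof of the coverage claim. You propose varying the drop map $\phi$ and arguing by double-counting that the dropped vertices sweep out $\Omega(\delta n)$ indices; you correctly flag this as the delicate step but leave it as a sketch.

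The paper's argument sidesteps this entirely via contradiction. Assume the coverage set $W$ has $|W| < (\delta/2)n$. Fix an \emph{arbitrary} drop map once and for all (choose one vertex $a_E$ per triple $E$), and build $G_T$ using only those triples $E \in \mathcal{H}_t$ with $a_E \notin W$. Since each $\mathcal{H}_t$ is a matching, at most $|W|$ of its triples have their dropped vertex in $W$, so each color class retains at least $\delta n - |W| > (\delta/2)n$ edges---still enough to invoke \cref{thm:abszz}. The resulting rainbow cycle yields a shift $T'$ whose new indices $a_{E_s}$ all lie outside $W$ by construction, contradicting the definition of $W$. No randomization or averaging over drop maps is needed; the ``delicate step'' you anticipated dissolves into a one-line counting argument.
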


Indeed, from \cref{thm:compression}, our main result \cref{thm:main} immediately follows. 

\begin{proof}[Proof of~\cref{thm:main} from~\cref{thm:compression}]
By \cref{thm:compression}, for each $x \in \F_2^k$, we know of a set $I_x \subseteq [n]$ of size at most $O(\delta^{-2}\log{n}\log\log{n})$ satisfying $x = \sum_{i \in I_x}{v_i}$. Now, for distinct $x,y \in \F_2^k$, it follows from the definition of $I_x$ that $I_x \neq I_y$. Since $\abs{I_x} \le O(\delta^{-2}\log{n}\log\log{n})$, then there are at most $n^{O(\delta^{-2}\log{n}\log\log{n})}$ possibilities for any $I_x$. Thus $2^k \le n^{O(\delta^{-2}\log{n}\log\log{n})}$, from which we conclude that $k \le O(\delta^{-2}\log^2{n}\log\log{n})$.
\end{proof}

It therefore suffices to establish~\cref{thm:compression}. For that, we will rely on the following key lemma. 

\begin{lemma}
\label{lem:shift-representation}
Let $c_0$ be the absolute constant from~\cref{thm:abszz}.
For any set $T \subseteq [n]$ of size at least $2c_0\delta^{-1}\log{n}\log\log{n}$, let $W \subseteq [n]$ be the set of indices $j \in [n]$ such that there exists a multi-set $T'$ of indices in $[n]$ with $j \in T'$ satisfying $\abs{T'} \le \abs{T}$ and
\begin{equation*}
    \sum_{t \in T}{v_t} = \sum_{t \in T'}{v_t} \ .
\end{equation*}
Then $\abs{W} \ge (\delta/2)n$.
\end{lemma}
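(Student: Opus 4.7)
I would prove Lemma~\ref{lem:shift-representation} by contradiction: assume $|W| < \delta n / 2$ and produce a new multi-set $T'$ of size at most $|T|$ with $\sum_{t \in T'} v_t = \sum_{t \in T} v_t$ that contains an index outside $W$, contradicting the definition of $W$. The mechanism, following Section~\ref{subsec:proof-overview}, will be to apply the rainbow cycle bound of Theorem~\ref{thm:abszz} to a carefully chosen properly edge-colored graph on vertex set $[n]$.

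The central step is the construction of the auxiliary graph $G_T = ([n], E, c)$. For each color $t \in T$ and each hyperedge $F = \{a,b,c\} \in \mathcal{H}_t$ that is \emph{not} entirely contained in $W$, I would select a ``dropped'' vertex $a_F \in F \setminus W$ and insert the edge $F \setminus \{a_F\}$ with color $t$. Since $\mathcal{H}_t$ is a $3$-uniform matching: (i) distinct hyperedges of the same $t$ are vertex-disjoint, so each color class of $G_T$ is a matching and $c$ is a proper edge coloring; and (ii) at most $|W|/3 < \delta n /6$ hyperedges of $\mathcal{H}_t$ can fit inside $W$, so the color class of $t$ contributes at least $\delta n - \delta n/6 = 5 \delta n / 6$ edges. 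Summing over $t \in T$ and invoking the hypothesis $|T| \ge 2 c_0 \delta^{-1} \log n \log \log n$ yields $|E| \ge (5 c_0 / 3) \, n \log n \log \log n \ge c_0 n \log n \log \log n$, meeting the threshold of Theorem~\ref{thm:abszz}.

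From the resulting rainbow cycle $(i_1, \ldots, i_m, i_{m+1} = i_1)$ with distinct colors $t_1, \ldots, t_m \in T$ and associated hyperedges $F_s = \{a_{F_s}, i_s, i_{s+1}\} \in \mathcal{H}_{t_s}$, I would read off the new representation by summing the local identities $v_{t_s} = v_{a_{F_s}} + v_{i_s} + v_{i_{s+1}}$. The cyclic contribution $\sum_s v_{i_s} + \sum_s v_{i_{s+1}}$ vanishes in $\F_2$ since $i_{m+1} = i_1$, leaving $\sum_{s=1}^{m} v_{t_s} = \sum_{s=1}^{m} v_{a_{F_s}}$. Defining the multi-set $T' := (T \setminus \{t_1, \ldots, t_m\}) \uplus \{a_{F_1}, \ldots, a_{F_m}\}$, one checks $|T'| \le |T|$ and $\sum_{t \in T'} v_t = \sum_{t \in T} v_t$. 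But $a_{F_s} \in T'$ and $a_{F_s} \in [n] \setminus W$ for each $s$ by construction, contradicting the definition of $W$.

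The main design point I expect to require care is the choice of dropped vertex: it must always lie outside $W$ so that the rainbow cycle actually certifies a new index outside $W$, and this single constraint is what forces the hypothesis $|W| < \delta n / 2$ through the matching-based counting. The edge count to meet the Theorem~\ref{thm:abszz} threshold and the $\F_2$ arithmetic are then essentially automatic. Parallel edges in $G_T$ (different color classes can produce the same unordered pair) pose no issue, since Theorem~\ref{thm:abszz} handles non-simple graphs and in fact a pair of parallel edges already supplies a rainbow $2$-cycle.
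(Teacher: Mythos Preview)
Your argument is correct and matches the paper's proof essentially line for line: assume $|W|<(\delta/2)n$, build a properly edge-colored graph on $[n]$ by dropping one vertex outside $W$ from each hyperedge, invoke Theorem~\ref{thm:abszz}, and read off a shift $T'$ containing some $a_{F_s}\notin W$. The only cosmetic difference is that the paper fixes an arbitrary labeling $E=\{a_E,b_E,c_E\}$ in advance and discards the hyperedge whenever $a_E\in W$ (losing up to $|W|<(\delta/2)n$ edges per color), whereas you choose the dropped vertex adaptively and only discard hyperedges entirely inside $W$; both counts clear the $c_0 n\log n\log\log n$ threshold.
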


Indeed, assuming \cref{lem:shift-representation}, \cref{thm:compression} follows as argued below.

\begin{proof}[Proof of \cref{thm:compression} from \cref{lem:shift-representation}]
Let $I \subseteq [n]$ be a set of minimal cardinality satisfying $x = \sum_{i \in I}{v_i}$. Such a set exists as the vectors $v_1 , \ldots , v_n$ span $\F_2^k$ (as $M$ is full rank). Assume (for the sake of a contradiction) that $\abs{I} \ge 10c_0\delta^{-2}\log{n}\log\log{n}$. Randomly partition $I$ into $p \coloneqq \lceil 4/\delta \rceil$ sets $T_1 \ldots , T_p$ of equal size. Then $\abs{T_\ell} \ge 2 c_0 \delta^{-1}\log{n}\log\log{n}$ for all $\ell \in [p]$. Thus we can apply \cref{lem:shift-representation} to find sets $W_1, \ldots , W_p$ of size at least $(\delta/2)n$ each satisfying the property stated in~\cref{lem:shift-representation}. Observe that $\sum_{\ell=1}^p{\abs{W_\ell}} \ge (4/\delta) \cdot (\delta/2)n = 2n > n$. Thus, we can find distinct $\ell_1, \ell_2 \in [p]$ such that there is an index $j \in W_{\ell_1} \cap W_{\ell_2}$. Without loss of generality, say $(\ell_1, \ell_2) = (1,2)$. Then by \cref{lem:shift-representation}, we can find multi-sets $T_1', T_2' \subseteq [n]$ with $j \in T_1' \cap T_2'$ satisfying $\abs{T_1'} \le \abs{T_1}$, $\abs{T_2'} \le \abs{T_2}$, and
\begin{equation}
    \sum_{i \in T_1}{v_i} = \sum_{i \in T_1'}{v_i} \ , 
    \quad \text{as well as} \quad 
    \sum_{i \in T_2}{v_i} = \sum_{i \in T_2'}{v_i} \ . \label{eq:compression}
\end{equation}
Now, define the multi-set $I' \coloneqq (T_1' \setminus \{j\}) \cup (T_2' \setminus \{j\}) \cup \cup_{\ell=3}^p{T_\ell}$. From \eqref{eq:compression}, we find that
\begin{align*}
x = \sum_{i \in I}{v_i} &= \sum_{i \in T_1}{v_i} + \sum_{i \in T_2}{v_i} + \sum_{\ell=3}^p{\sum_{i \in T_\ell}{v_i}} \\
&= \sum_{i \in T_1'}{v_i} + \sum_{i \in T_2'}{v_i} + \sum_{\ell=3}^p{\sum_{i \in T_\ell}{v_i}} \\
&= \Bigl(v_j + \sum_{i \in T_1' \setminus \{j\}}{v_i}\Bigr) + \Bigl(v_j + \sum_{i \in T_2' \setminus \{j\}}{v_i}\Bigr) + \sum_{\ell=3}^p{\sum_{i \in T_\ell}{v_i}} \\
&= \sum_{i \in I'}{v_i} \ .
\end{align*}
Thus $x = \sum_{i \in I'}{v_i}$. On the other hand, since $\abs{T_1'} \le \abs{T_1}$ and $\abs{T_2'} \le \abs{T_2}$, then we find that
\begin{align*}
\abs{I'} = \abs{T_1' \setminus \{j\}} + \abs{T_2' \setminus \{j\}} + \sum_{\ell=3}^p{\abs{T_\ell}} \le (\abs{T_1}-1) + (\abs{T_2}-1) + \sum_{\ell=3}^p{\abs{T_\ell}} = \abs{I}-2 \ .
\end{align*}
This contradicts the minimality of $I$, which is what we wanted to show.
\end{proof}

We now turn to the proof of \cref{lem:shift-representation}. For this part, we introduce some notations. For any hyperedge $E \in \cup_{i=1}^k{\mathcal{H}_i}$, write $E = \{a_E, b_E, c_E\}$ for $a_E, b_E, c_E \in [n]$, and let $e_E \coloneqq \{b_E, c_E\}$.

\begin{proof}[Proof of \cref{lem:shift-representation}]
Assume (for the sake of a contradiction) that $\abs{W} < (\delta/2)n$. Consider the graph $G$ consisting of $[n]$ as vertices, $T$ as edge colors, and for each $t \in T$, the set $\{e_E : E \in \mathcal{H}_t, a_E \notin W\}$ as the edges of the color class $t$. Because $\{\mathcal{H}_t\}_{t \in T}$ are $3$-uniform matchings, any color class of edges in $G$ will form a matching of edges, meaning that $G$ is properly edge-colored. Furthermore, because $\{\mathcal{H}_t\}_{t \in T}$ are each of size at least $\delta n$, each color class has at least $\abs{\mathcal{H}_t} - \abs{W} > \delta n - (\delta/2)n = (\delta/2) n$ edges. Thus $G$ has at least $(\delta/2) n \cdot \abs{T} \ge c_0 n \log{n} \log\log{n}$ edges. 

By~\cref{thm:abszz}, there exists a positive integer $m \ge 2$, distinct indices $t_1, \ldots , t_m \in T$, and hyperedges $E_s \in \mathcal{H}_{t_s}$ for $s \in [m]$ such that the edges $(e_{E_1}, \ldots, e_{E_m})$ form a rainbow cycle in $G$. This implies that $\oplus_{s=1}^m{e_{E_s}} = \varnothing$. Now, define the set $T_0 \coloneqq T \setminus \{t_1, \ldots , t_m\}$. Then we have that
\begin{align*}
\sum_{t \in T}{v_t} = \sum_{s=1}^m{v_{t_s}} + \sum_{t \in T_0}{v_t} &= \sum_{s=1}^m{\left(v_{a_{E_s}} + v_{b_{E_s}} + v_{c_{E_s}}\right)} + \sum_{t \in T_0}{v_t} \\
&= \sum_{s=1}^m{\left(v_{b_{E_s}} + v_{c_{E_s}}\right)} + \sum_{s=1}^m{v_{a_{E_s}}} + \sum_{t \in T_0}{v_t} \\
&= \sum_{i \in \bigoplus_{s=1}^m{e_{E_s}}}{v_i} + \sum_{s=1}^m{v_{a_{E_s}}} + \sum_{t \in T_0}{v_t} \\
&= \sum_{s=1}^m{v_{a_{E_s}}} + \sum_{t \in T_0}{v_t} \ .
\end{align*}
Thus if we define the multi-set $T' \coloneqq T_0 \cup \{a_{E_1}, \ldots , a_{E_m}\}$, then we see that $\abs{T'} = \abs{T}$ and $\sum_{t \in T}{v_t} = \sum_{t \in T'}{v_t}$. However, since $e_{E_s}$ is an edge in $G$ for each $s \in [m]$, then from the definition of $G$, we see that $a_{E_s} \notin W$ for all $s \in [m]$. This yields a contradiction by the definitions of $W$ and $T'$.
\end{proof}

\section{Rainbow LDC bounds and higher query LCCs}
\label{sec:rainbow-ldc-higher-query}

In this section, we develop the notion of ``rainbow'' LDC lower bounds and use the direct sum transformation of~\cite{KdW04} and the result of~\cite{ABSZZ23} to prove~\cref{thm:odd-lcc-lbs}.

One salient feature of the proof of~\cref{thm:intro-cov-rad} is that it crucially relies on the results of~\cite{ABSZZ23} (\cref{thm:abszz}) regarding the existence of rainbow cycles in properly edge-colored graphs, which was only feasible due to the $3$-uniformity of the query sets. For higher query complexities, we remedy this obstacle by introducing a hypergraph generalization of~\cref{thm:abszz}, stated below.

\begin{defn}[Rainbow LDC Lower Bound]
\label{def:rainbow-ldc-lbs}
For $\delta > 0$ and $r, n \in \N$ with $r \ge 2$, let $k_{\text{\normalfont rainbow}}^{(r)}(\delta, n)$ be the smallest natural number such that the following holds: for any arbitrary $r$-matchings $\mc{H}_1, \ldots , \mc{H}_k$ over $[n]$ with $k \ge k_{\text{\normalfont rainbow}}^{(r)}(\delta, n)$ satisfying $\abs{\mc{H}_i} \ge \delta n$ for all $i \in [k]$, there exists a nonempty collection of hyperedges $\mc{E} \subseteq \cup_{i=1}^k{\mc{H}_i}$ such that $\bigoplus_{E \in \mc{E}}{E} = \varnothing$ and $\abs{\mc{E} \cap \mc{H}_i} \le 1$ for all $i \in [k]$.
\end{defn}

We dub~\cref{def:rainbow-ldc-lbs} as the \emph{rainbow LDC lower bound} problem. Our choice of naming comes from the fact that upper bounds on $k_{\text{rainbow}}^{(r)}(\delta, n)$ formally prove limitations for binary linear $r$-LDCs. This can be seen from the viewpoint of LDC lower bounds as finding ``odd even covers," formally shown in~\cite{HKMMS24}.

\begin{prop}{\cite[Lemma 2.7]{HKMMS24}}
\label{prop:ldc-lbs}
For $\delta > 0$ and $r,n \in \N$ with $r \ge 2$, let $k_{\text{\normalfont odd}}^{(r)}(\delta, n) \in \N$ be the smallest natural number such that the following holds: for any arbitrary $r$-matchings $\mc{H}_1, \ldots , \mc{H}_k$ over $[n]$ with $k \ge k_{\text{\normalfont odd}}^{(r)}(\delta, n)$ satisfying $\abs{\mc{H}_i} \ge \delta n$ for all $i \in [k]$, there exists a nonempty collection of hyperedges $\mc{E} \subseteq \cup_{i=1}^k{\mc{H}_i}$ such that $\bigoplus_{E \in \mc{E}}{E} = \varnothing$ and $\abs{\mc{E} \cap \mc{H}_i}$ is \emph{odd} for some $i \in [k]$. Then any binary linear $(r,\delta)$-LDC\footnote{See~\cref{def:2-ldc} for a formal definition of a linear $(r,\delta)$-LDC.} of block length $n$ has dimension less than $k_{\text{\normalfont odd}}^{(r)}(\delta, n)$.
\end{prop}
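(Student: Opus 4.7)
The plan is to prove the contrapositive: a binary linear $(r,\delta)$-LDC of block length $n$ and dimension $k$ must satisfy $k < k_{\text{odd}}^{(r)}(\delta, n)$. First, unpack the combinatorial definition of an $r$-query linear LDC, which (in analogy with~\cref{def:lcc-combi}, but recovering \emph{message} coordinates rather than codeword coordinates) equips the code with $r$-uniform matchings $\mc{H}_1, \ldots, \mc{H}_k$ over $[n]$ satisfying $\abs{\mc{H}_i} \ge \delta n$, together with generator-matrix rows $v_1, \ldots, v_n \in \F_2^k$, such that $e_i = \sum_{s=1}^r v_{a_s}$ whenever $\{a_1, \ldots, a_r\} \in \mc{H}_i$, where $e_i$ denotes the $i$-th standard basis vector of $\F_2^k$.

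Next, assume for contradiction $k \ge k_{\text{odd}}^{(r)}(\delta, n)$, and feed these matchings into the definition of $k_{\text{odd}}^{(r)}(\delta, n)$ to obtain a nonempty tagged collection $\mc{E} \subseteq \bigcup_{i=1}^k \mc{H}_i$ with $\bigoplus_{E \in \mc{E}} E = \varnothing$ and $\abs{\mc{E} \cap \mc{H}_{i^*}}$ odd for some $i^* \in [k]$. (One regards each hyperedge in $\mc{E}$ as tagged by the matching it came from so that intersection counts are unambiguous, a convention implicit in~\cref{def:rainbow-ldc-lbs}.) Summing the local identity $e_i = \sum_{j \in E} v_j$ over $\F_2$ across all $E \in \mc{E}$ gives
\[
\sum_{i=1}^k \abs{\mc{E} \cap \mc{H}_i} \cdot e_i \;=\; \sum_{E \in \mc{E}} \sum_{j \in E} v_j \;=\; \sum_{j \in [n]} \mathrm{mult}_j(\mc{E}) \cdot v_j,
\]
where $\mathrm{mult}_j(\mc{E})$ counts the hyperedges of $\mc{E}$ containing $j$. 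Reducing modulo $2$, the left-hand side equals $\sum_{i \,:\, \abs{\mc{E} \cap \mc{H}_i}\text{ odd}} e_i \neq 0$ (since $e_{i^*}$ contributes), while the right-hand side equals $\sum_{j \in \bigoplus_{E \in \mc{E}} E} v_j = 0$ by the symmetric-difference hypothesis on $\mc{E}$. This equates a nonzero $\F_2$-combination of standard basis vectors with zero, a contradiction, so $k < k_{\text{odd}}^{(r)}(\delta, n)$.

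The argument is essentially a one-shot $\F_2$-double counting once the combinatorial LDC definition is unpacked, so there is no serious obstacle; the only point deserving care is the tagging convention for elements of $\bigcup_i \mc{H}_i$, required so that the quantities $\abs{\mc{E} \cap \mc{H}_i}$ and $\mathrm{mult}_j(\mc{E})$ are well-defined when the same set of $r$ indices could in principle occur in multiple matchings.
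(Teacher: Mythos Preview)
Your proof is correct and is essentially the standard argument for this fact. The paper itself does not supply a proof of this proposition; it is quoted as \cite[Lemma~2.7]{HKMMS24}, so there is nothing to compare against beyond noting that your double-counting over $\F_2$ (summing the identities $e_i = \sum_{j\in E} v_j$ across the tagged collection $\mc{E}$ and using $\bigoplus_{E\in\mc{E}} E=\varnothing$ to kill the right-hand side) is exactly the intended one-line reason this reduction works.
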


Note that $k_{\text{odd}}^{(r)}(\delta, n) \le k_{\text{rainbow}}^{(r)}(\delta, n)$ as the property in \cref{def:rainbow-ldc-lbs} implies the property in~\cref{prop:ldc-lbs}. Now, with~\cref{def:rainbow-ldc-lbs} at hand, we can state our generalization of~\cref{thm:main}.

\begin{theorem}
\label{thm:main-general}
Let $\delta \in (0,1)$ and $r, n \in \N$ with $r \ge 3$. Then for any $[n,k]$ binary linear $(r,\delta)$-LCC, we have that
\begin{equation*}
    k \le O(\delta^{-1}  \cdot \log{n} \cdot k_{\text{\normalfont rainbow}}^{(r-1)}(\delta/2, n)) \ .
\end{equation*}
\end{theorem}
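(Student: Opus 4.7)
The plan is to mimic the proof of \cref{thm:main} (via \cref{thm:compression} and \cref{lem:shift-representation}) essentially verbatim, substituting the rainbow cycle result of \cref{thm:abszz} with the abstract threshold $k_{\text{\normalfont rainbow}}^{(r-1)}(\delta/2, n)$ from \cref{def:rainbow-ldc-lbs}. The key observation is that each local check $v_i = \sum_{s=1}^r v_{a_s}$ for $\{a_1,\dots,a_r\} \in \mc{H}_i$ can be rewritten as $v_i = v_{a_0} + \sum_{s \in E \setminus \{a_0\}} v_s$ for any distinguished vertex $a_0 \in E$, turning each $r$-uniform hyperedge into an $(r-1)$-uniform hyperedge plus one ``leftover'' index; this is precisely what the definition of $k_{\text{\normalfont rainbow}}^{(r-1)}$ needs.

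The first step is to prove the analog of \cref{lem:shift-representation}: if $|T| \ge k_{\text{\normalfont rainbow}}^{(r-1)}(\delta/2, n)$, then the set $W$ of ``shift-reachable'' indices has $|W| \ge (\delta/2)n$. Suppose for contradiction $|W| < (\delta/2)n$. For each hyperedge $E \in \bigcup_t \mc{H}_t$, fix an arbitrary distinguished vertex $a_{E,0} \in E$ in advance, and define $\mc{H}_t' \coloneqq \{E \setminus \{a_{E,0}\} : E \in \mc{H}_t,\ a_{E,0} \notin W\}$. Since $\mc{H}_t$ is an $r$-uniform matching, $\mc{H}_t'$ is an $(r-1)$-uniform matching; and since the $a_{E,0}$ for $E \in \mc{H}_t$ are distinct by the matching property, at most $|W|$ hyperedges are excluded, so $|\mc{H}_t'| \ge \delta n - |W| > (\delta/2)n$. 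Applying \cref{def:rainbow-ldc-lbs} to $\{\mc{H}_t'\}_{t \in T}$ yields distinct colors $t_1, \dots, t_m \in T$ and hyperedges $E_s \in \mc{H}_{t_s}$ with $\bigoplus_{s=1}^m (E_s \setminus \{a_{E_s,0}\}) = \varnothing$. Expanding $\sum_s v_{t_s} = \sum_s v_{a_{E_s,0}} + \sum_s \sum_{i \in E_s \setminus \{a_{E_s,0}\}} v_i$ over $\F_2$ causes the second sum to vanish, so the multi-set $T' \coloneqq (T \setminus \{t_1, \dots, t_m\}) \cup \{a_{E_1,0}, \dots, a_{E_m,0}\}$ satisfies $|T'| = |T|$ and $\sum_{t \in T} v_t = \sum_{t \in T'} v_t$ with each $a_{E_s,0} \notin W$, contradicting the definition of $W$.

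The second step derives \cref{thm:main-general} from this shift lemma by iterative compression, mirroring the proof of \cref{thm:compression}. Writing $B \coloneqq k_{\text{\normalfont rainbow}}^{(r-1)}(\delta/2, n)$ and $p \coloneqq \lceil 4/\delta \rceil$, take a minimal $I \subseteq [n]$ with $x = \sum_{i \in I} v_i$, suppose $|I| \ge p(B+1)$, and randomly partition $I$ into $p$ parts $T_1,\dots,T_p$ each of size at least $B$. Applying the shift lemma to each yields sets $W_1, \dots, W_p$ of size $\ge (\delta/2)n$, so $\sum_\ell |W_\ell| \ge 2n$ forces some $W_{\ell_1}, W_{\ell_2}$ to share an index $j$; substituting in the corresponding shifts and cancelling the shared $v_j$ compresses $|I|$ by $2$, contradicting minimality. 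Hence every $x$ admits a representation of size $O(\delta^{-1} B)$, and the counting bound $2^k \le n^{O(\delta^{-1} B)}$ gives $k \le O(\delta^{-1} \log n \cdot k_{\text{\normalfont rainbow}}^{(r-1)}(\delta/2, n))$. The only spot where care is needed is the auxiliary $(r-1)$-uniform matching construction: the distinguished vertex $a_{E,0}$ per hyperedge must simultaneously make $\mc{H}_t'$ large enough to invoke \cref{def:rainbow-ldc-lbs} (which the matching property automatically grants, modulo the $|W|$ losses) and force the shift indices to lie outside $W$ (which the selection condition $a_{E,0} \notin W$ guarantees); everything else is a mechanical lift of the $r=3$ argument to the abstracted rainbow LDC framework.
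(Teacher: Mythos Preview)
Your proposal is correct and follows precisely the approach the paper indicates: replacing \cref{thm:abszz} by the abstract threshold $k_{\text{\normalfont rainbow}}^{(r-1)}(\delta/2,n)$, dropping one distinguished vertex $a_{E,0}$ from each $r$-edge to form $(r-1)$-uniform matchings $\mc{H}_t'$ of size exceeding $(\delta/2)n$, and then rerunning the shift lemma and iterative compression verbatim. The paper itself only sketches this (``left as an exercise''), and your writeup fills in exactly the intended generalization, including the one point requiring care---that the matching property makes the distinguished vertices within a single $\mc{H}_t$ distinct, so excluding those in $W$ costs at most $|W|$ hyperedges.
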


The proof of~\cref{thm:main-general} follows almost identically the proof of~\cref{thm:main} in~\cref{sec:binary-3-lcc-proof}. Indeed, the main property we needed from the rainbow cycle we found via~\cref{thm:abszz} was that the symmetric difference of the edges was the empty set and that every color appeared at most once. Thus if we generalize properly edge-colored graphs to properly edge-colored $(r-1)$-uniform hypergraphs\footnote{We say that an edge coloring of a hypergraph $\mc{H}$ is \emph{proper} if for any distinct hyperedges $e_1, e_2 \in \mc{H}$ satisfying $e_1 \cap e_2 \neq \varnothing$, $e_1$ and $e_2$ are assigned different colors.} and use~\cref{def:rainbow-ldc-lbs} in place of~\cref{thm:abszz} in~\cref{sec:binary-3-lcc-proof}, the proof of~\cref{thm:main-general} would then follow. To avoid redundancy, we leave the full proof of~\cref{thm:main-general} as an exercise for the reader.

As for upper and lower bounds on $k_{\text{rainbow}}^{(r)}(\delta, n)$, we know for $r = 2$ that $k_{\text{rainbow}}^{(2)}(\delta, n) \ge \Omega(\log{n})$ by considering the canonical coloring of the edges of the hypercube. Furthermore, by~\cref{thm:abszz}, we also know that $k_{\text{rainbow}}^{(2)}(\delta, n) \le O(\delta^{-1} \log n \log \log n)$. Now, as for $r \ge 3$, it follows from considering random $r$-uniform matchings that $k_{\text{rainbow}}^{(r)}(\delta, n) \ge \Omega_{\delta}(n^{1-2/r})$~\cite{HKM24}, which is a much higher lower bound than the bound $k_{\text{odd}}^{(r)}(\delta, n) \ge \exp(\Omega_{\delta}((\log{\log{n}})^2))$ for $r \ge 3$ obtained from known constructions of binary linear $r$-LDCs~\cite{Yek08, Efr12}.

Now, for the remainder of this section, we will prove the following proposition.

\begin{prop}
\label{prop:even-rainbow-ubs}
For any even $r \ge 4$ and $\delta \in (0,1)$, we have $k_\text{\normalfont rainbow}^{(r)}(\delta, n) \le O(\delta^{-1}n^{1-2/r}\log^2{n})$.
\end{prop}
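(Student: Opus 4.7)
The strategy is to adapt the direct-sum transformation of Katz and de Wolf~\cite{KdW04} (which reduces $r$-query LDCs to $2$-query LDCs) so that it preserves the rainbow structure defined in~\cref{def:rainbow-ldc-lbs}, and then invoke~\cref{thm:abszz} on the resulting $2$-uniform matchings (viewed as a properly edge-colored graph). This is the direct ``rainbow'' analogue of the standard reduction used to prove $k \le \widetilde{O}(n^{1-2/r})$ for usual $r$-LDCs with even $r \ge 4$.

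Concretely, starting from $r$-uniform matchings $\mc{H}_1, \ldots, \mc{H}_k$ over $[n]$ with each $|\mc{H}_i| \ge \delta n$, the first step is to build a properly edge-colored graph $G' = (V', E', c')$ on a vertex set $V'$ such that (i) each color class of $G'$ is a matching (so $c'$ is a proper edge coloring) and (ii) any rainbow cycle in $G'$ pulls back to a rainbow collection $\mc{E} \subseteq \bigcup_{i=1}^k \mc{H}_i$ with $\bigoplus_{E \in \mc{E}} E = \varnothing$. The natural construction embeds $V'$ inside $\binom{[n]}{r/2}$: for each $E \in \mc{H}_i$, split $E$ canonically into halves $E = E^{(1)} \sqcup E^{(2)}$ (each of size $r/2$) and add the edge $\{E^{(1)}, E^{(2)}\}$ of color $i$. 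The matching property of $\mc{H}_i$ ensures (i); for (ii), each vertex of $V'$ traversed by a rainbow cycle is incident to exactly two consecutive edges, i.e., appears in exactly two of the corresponding hyperedges, so every element of $[n]$ is covered an even number of times by the multi-union, yielding $\bigoplus_j E_j = \varnothing$.

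With (i) and (ii) in place, the next step is to apply~\cref{thm:abszz} to $G'$: a rainbow cycle is guaranteed once $|E(G')| \ge c_0 |V'| \log|V'| \log\log|V'|$. For this threshold to translate to the claimed bound $k = O(\delta^{-1} n^{1-2/r} \log^2 n)$, the effective vertex count must scale like $|V'| = \widetilde{O}(n^{2-2/r})$, while the per-color edge count stays at $\Omega(\delta n)$; then~\cref{thm:abszz} kicks in at $k \gtrsim |V'| \log|V'| \log\log|V'| / (\delta n) = \widetilde{O}(\delta^{-1} n^{1-2/r})$. Achieving this vertex count calls for a refinement of the naive $V' = \binom{[n]}{r/2}$ of size $\approx n^{r/2}$. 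The refinement can be carried out either by iterating the direct sum (halving the uniformity at each step, relevant for $r \ge 6$) or by hashing/random projection of $\binom{[n]}{r/2}$ down to a smaller set, analogous to the sharpening of the KdW04 bound by Woodruff~\cite{Woo07}. Both properties (i) and (ii) must be maintained throughout the refinement.

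The main obstacle is exactly this compression step: reducing $|V'|$ to $\widetilde{O}(n^{2-2/r})$ without breaking the proper edge coloring or the rainbow-cycle-to-rainbow-even-cover correspondence. The per-color edge count must survive the compression, and the hashing (or iterated direct sum) must be designed so that a rainbow cycle in the compressed graph still witnesses an empty symmetric difference upstream. The extra $\log n$ factor in the final bound (compared to the $\log n \log\log n$ from~\cref{thm:abszz}) reflects the overhead incurred in this compression.
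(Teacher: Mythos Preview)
Your high-level strategy—direct sum plus \cref{thm:abszz}—is the right one, and your property (ii) (a rainbow cycle upstairs yields $\bigoplus_j E_j = \varnothing$ downstairs) is the correct mechanism. But the construction you outline does not reach the bound, and the ``compression'' you flag as the main obstacle is not how the paper closes the gap.

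Your graph on $V' \subseteq \binom{[n]}{r/2}$ with one edge per hyperedge has average degree $\Theta(k\delta n / n^{r/2})$, so \cref{thm:abszz} only fires at $k \gtrsim \delta^{-1} n^{r/2-1}\log n\log\log n$, which is vacuous for $r\ge 4$. You propose to repair this by shrinking $V'$ to size $\widetilde{O}(n^{2-2/r})$ via hashing or iterated halving, but neither is carried out, and both are problematic: hashing $\binom{[n]}{r/2}$ onto a smaller set will create collisions that destroy the telescoping identity behind (ii), and iterated halving just reproduces the naive construction. There is no natural set of size $n^{2-2/r}$ in sight.

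The paper moves in the opposite direction: it \emph{enlarges} the vertex set to $\binom{[n]}{\ell}$ with $\ell = \Theta(n^{1-2/r})$, and for each hyperedge $E\in\mc H_i$ it adds \emph{every} pair $\{A,B\}$ with $A\oplus B = E$ as an $i$-colored edge. Each hyperedge now contributes $\approx N\cdot(\ell/n)^{r/2}=\Theta(N/n)$ graph edges, so the average degree is $\Theta(k\delta)$ regardless of $N$, while $\log N\log\log N \approx \ell\log n\cdot\log n = \Theta(n^{1-2/r}\log^2 n)$—exactly the threshold you want. The price is that this graph is \emph{not} properly edge-colored (a vertex $A$ can meet several hyperedges of $\mc H_i$), so the paper performs a pruning step, deleting every edge that is incident to another edge of the same color, and shows via a counting argument that only a constant fraction of edges are lost. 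A rainbow cycle $A_1,\dots,A_m,A_1$ in the pruned graph then gives $\bigoplus_s E_s=\bigoplus_s(A_s\oplus A_{s+1})=\varnothing$ by telescoping, with distinct colors. The missing idea in your plan is thus: blow up rather than compress, and handle the resulting improper coloring by pruning.
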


Note that by combining~\cref{prop:even-rainbow-ubs} and~\cref{thm:main-general}, we immediately deduce~\cref{thm:odd-lcc-lbs}. Thus it suffices for us to prove~\cref{prop:even-rainbow-ubs}.

\begin{proof}[Proof of~\cref{prop:even-rainbow-ubs}]
We proceed by applying the direct sum transformation of~\cite{KdW04} to produce an edge-colored graph from the $r$-uniform matchings. Then using a deletion process similar to what was done in~\cite{GKM22, HKM23, AGKM23}, we will delete a sub-constant fraction of the edges from the graph to produce a properly edge-colored graph. We then apply~\cref{thm:abszz} to obtain a rainbow cycle and thus recover a rainbow even cover from it. The formal details follow.

Let $c_0$ be the absolute constant from \cref{thm:abszz}. We will show that for every choice of $r$-uniform matchings $\mathcal{H}_1, \ldots , \mathcal{H}_k$ over $[n]$ with $k \ge 2^{r^2+1}c_0\delta^{-1}n^{1-2/r}\log^2{n}$ and $\abs{\mathcal{H}_i} \ge \delta n$ for all $i \in [k]$, there is a nonempty subset of hyperedges $\mathcal{E} \subseteq \cup_{i=1}^k{\mathcal{H}_i}$ satisfying $\oplus_{E \in \mathcal{E}}{E} = \varnothing$ and $\abs{\mathcal{E} \cap \mathcal{H}_i} \le 1$ for all $i \in [k]$. This will imply $k_{\text{\normalfont rainbow}}^{(r)}(\delta, n) \le 2^{r^2+1}c_0\delta^{-1}n^{1-2/r}\log^2{n} = O(\delta^{-1} n^{1-2/r} \log^2 n)$.

Define $\ell \coloneqq 4^{-r}n^{1 - 2/r}$ and $N \coloneqq \binom{n}{\ell}$. Consider an edge-colored (not necessarily simple) graph $G$ over $\binom{[n]}{\ell}$ where two vertices $A, B \in \binom{[n]}{\ell}$ share an edge of color $i \in [k]$ if and only if $A \oplus B \in \mathcal{H}_i$. Fix any index $i \in [k]$ and hyperedge $E \in \mathcal{H}_i$. Observe that the number of sets $A,B \in \binom{[n]}{\ell}$ satisfying $A \oplus B = E$ is
\begin{equation}
\label{eq:edge-count}
    \binom{r}{r/2}\binom{n-r}{\ell - r/2} \ge N \cdot \left(\frac{\ell}{n}\right)^{r/2} = N \cdot \frac{2^{-r^2}}{n} \ .
\end{equation}
Now, let us upper bound the number of edges $\{A, B\}$ in $G$ of color $i$ satisfying $A \oplus B = E$ such that one of $A$ or $B$ is incident to another edge in $G$ of color $i$. Consider a set $A' \in \binom{[n]}{\ell}$ different from $B$ such that $\{A,A'\}$ is an edge in $G$ of color $i$. Define $E' \coloneqq A \oplus A' \in \mathcal{H}_i$. Because $\abs{A} = \abs{B} = \abs{A'} = \ell$, then we deduce that $\abs{A \cap E} = \abs{B \cap E} = \abs{A \cap E'} = \abs{A' \cap E'} = r/2$. Furthermore, because $A' \neq B$ and $\mathcal{H}_i$ is a matching, we have $E \neq E'$ and hence $E \cap E' = \varnothing$. Thus we find that $\abs{A \cap (E \cup E')} = r$. Now, since $\mathcal{H}_i$ is an $r$-uniform matching, then $\abs{\mathcal{H}_i} \le n/r$. Thus there are at most $n/r$ choices for $E'$ and hence at most $n/r$ choices for $E \cup E'$. For each such choice, there are at most $\binom{2r}{r} \binom{n-2r}{\ell-r}$ choices for $A'$. By repeating the same argument for $B$, we therefore deduce that the number of such edges $\{A,B\}$ is at most
\begin{equation}
\label{eq:repeat-edge-ub}
\frac{2n}{r} \cdot \binom{2r}{r} \binom{n-2r}{\ell-r} \le n \cdot 4^r \cdot \left(\frac{\ell}{n}\right)^r \cdot \binom{n}{\ell} = N \cdot \frac{4^r(4^{-r}n^{1 - 2/r})^r}{n^{r-1}} \le N \cdot \frac{2^{-2r^2+2r}}{n} \ .
\end{equation}
Now, let $G'$ be the edge-colored subgraph of $G$ consisting of all edges in $G$ that are not incident to any other edge of the same color. By definition, it follows that $G'$ is properly edge-colored. Furthermore, by combining~\eqref{eq:edge-count} and~\eqref{eq:repeat-edge-ub}, we find that the number of edges in $G'$ is at least
\begin{align*}
\left(N \cdot \frac{2^{-r^2}}{n} - N \cdot \frac{2^{-2r^2+2r}}{n}\right)\sum_{i=1}^k{\abs{\mathcal{H}_i}} &\ge N \cdot \frac{2^{-r^2-1}}{n} \cdot k \cdot \delta n \\
&= N \cdot 2^{-r^2-1} \delta k \\
&\ge N \cdot 2^{-r^2-1} \delta (2^{r^2+1} \delta^{-1} c_0 n^{1-2/r} \log^2{n}) \\
&= c_0 N \cdot n^{1 - 2/r}\log{n} \cdot \log{n} \\
&\ge c_0 N \cdot \log{N}\cdot \log{\log{N}} \ .
\end{align*}
Thus by~\cref{thm:abszz}, we can find a rainbow cycle in $G'$. That is, there exists $m \in \N$ and distinct indices $i_1, \ldots , i_m \in [k]$ and sets $A_1, A_2, \ldots , A_m, A_{m+1} = A_1 \in \binom{[n]}{\ell}$ such that $A_s \oplus A_{s+1} \in \mathcal{H}_{i_s}$ for all $s \in [m]$. Now, define $E_s \coloneqq A_s \oplus A_{s+1} \in \mathcal{H}_{i_s}$ for each $s \in [m]$. Then we find that
\begin{equation*}
    \bigoplus_{s=1}^m{E_s} = \bigoplus_{s=1}^m{(A_s \oplus A_{s+1})} = \bigoplus_{s=1}^m{A_s} \oplus \bigoplus_{s=1}^m{A_s} = \varnothing \ .
\end{equation*}
Thus if we define the set $\mathcal{E} \coloneqq \{E_1, \ldots , E_m\}$, then we see that $\oplus_{E \in \mathcal{E}}{E} = \varnothing$ and $\abs{\mathcal{E} \cap \mathcal{H}_i} \le 1$ for all $i \in [k]$, which is what we wanted to show.
\end{proof}

\section{Acknowledgements}

We thank Peter Manohar and Pravesh Kothari for helpful discussions that led to~\cref{thm:odd-lcc-lbs} and for showing that a conjecture stated in an earlier version of this work would imply the existence of binary linear constant-query LDCs of blocklength polynomial in the message length.

\bibliographystyle{alpha}
\bibliography{3-lcc}

\appendix

\section{A new proof of the exponential linear 2-LDC lower bound}
\label{sec:new-2-ldc-proof}

In this appendix, we present a new proof of the well-known exponential lower bound for linear $2$-query locally decodable codes~\cite{KdW04, GKST06} \`{a} la~\cite{IS20}. We begin by stating the definition of a linear $2$-query LDC for general finite fields $\F_q$. Note that this is usually referred to as a linear $2$-LDC in normal form, but by known reductions~\cite{Yek12}, the existence of a linear $2$-LDC implies the existence of a linear $2$-LDC in normal form. In what follows, the vectors $e_1, \ldots , e_k \in \F_q^k$ denote the standard basis.

\begin{defn}[Linear LDC]
\label{def:2-ldc}
Given a generator matrix $M \in \F_q^{n \times k}$, let $v_i$ the $i$'th row of $M$ for $i \in [n]$. For $r \in \N$ and $\delta > 0$, we say that $M$ forms a $(r,\delta)$-locally decodable code (abbreviated $(r,\delta)$-LDC) if there exist $r$-uniform matchings $\mathcal{H}_1, \ldots , \mathcal{H}_k$ over $[n]$ such that $\abs{\mathcal{H}_i} \ge \delta n$ for all $i \in [k]$, and for any $i \in [k]$ and $E = \{a_1, \ldots , a_r\}\in \mathcal{H}_i$, there exist $\alpha_s^E \in \F_q \setminus \{0\}$ for $s \in [r]$ satisfying $e_i = \sum_{s=1}^r{\alpha_s^E v_{a_s}}$.
\end{defn}

\begin{remark}
While the LCC property (\cref{def:lcc-combi}) is a property of the code, the LDC property is a property of the generator matrix of the code and not an inherent property of the code. That is, a different choice of generator matrix for the same code would not necessarily fulfill \cref{def:2-ldc}.
\end{remark}

We now state the key result driving this section, which is the following \emph{weight contraction} lemma.

\begin{lemma}
\label{lem:contraction-2-ldc}
For any $x \in \F_q^k$, there exist $a_1,a_2 \in [n]$ and $\gamma_1, \gamma_2 \in \F_q \setminus \{0\}$ satisfying 
\begin{equation*}
    \text{\normalfont wt}(x + \gamma_1 v_{a_1} + \gamma_2 v_{a_2}) \le (1-2\delta/q)\text{\normalfont wt}(x) \ .
\end{equation*}
\end{lemma}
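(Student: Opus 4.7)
The plan is to prove this weight-contraction via an averaging argument. Let $m = \text{wt}(x)$ and $S = \text{supp}(x)$. I would define a random distribution on $(a_1, a_2, \gamma_1, \gamma_2)$ and show that $\mathbb{E}[\text{wt}(x + \gamma_1 v_{a_1} + \gamma_2 v_{a_2})] \le (1 - 2\delta/q) m$, which implies existence of a deterministic choice with the claimed contraction. First dispose of the easy case $m \le q/(2\delta)$: fix any $i \in S$ and any $e \in \mathcal{H}_i$, and take $(\gamma_1, \gamma_2) = (-x_i \alpha^e_1, -x_i \alpha^e_2)$. Then $\gamma_1 v_{a_1} + \gamma_2 v_{a_2} = -x_i e_i$, so $x' := x + \gamma_1 v_{a_1} + \gamma_2 v_{a_2}$ has weight exactly $m-1 \le (1-2\delta/q) m$ in this regime.

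For the main case $m > q/(2\delta)$, I would exploit the extra degree of freedom on the affine line of $(\gamma_1, \gamma_2)$-pairs that zero the $i$-th coordinate. Sample $i \in S$, $e \in \mathcal{H}_i$, and $t \in \F_q$ uniformly; set $(\gamma_1, \gamma_2) = (-x_i \alpha_1^e + t (v_{a_2})_i,\ -x_i \alpha_2^e - t (v_{a_1})_i)$, where I write $(a_1, a_2) = (a^e_1, a^e_2)$ and $(\alpha_1^e, \alpha_2^e)$ for the LDC coefficients. This zeroes the $i$-th coordinate, and using the key identity $\alpha_1^e (v_{a_1})_j + \alpha_2^e (v_{a_2})_j = 0$ for all $j \ne i$ (since $(e_i)_j = 0$), one can parameterize $((v_{a_1})_j, (v_{a_2})_j) = c_j(\alpha_2^e, -\alpha_1^e)$ for scalars $c_j \in \F_q$. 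A direct computation then gives the clean identity $x'_j = x_j + t c_j$ for all $j \ne i$. Under random $t \in \F_q$, $\Pr[x'_j = 0]$ equals $1$ if $c_j = 0 = x_j$, equals $0$ if $c_j = 0 \ne x_j$, and equals $1/q$ if $c_j \ne 0$. Summing over $j$ with $T_{e,i} := \{j \ne i : c_j \ne 0\} = \text{supp}(v_{a_1}) \setminus \{i\}$ yields
\[ \mathbb{E}[\text{wt}(x')] \;=\; (m-1) \;-\; \tfrac{1}{q}\, \mathbb{E}_{i,e}\bigl|S \cap T_{e,i}\bigr| \;+\; \bigl(1 - \tfrac{1}{q}\bigr)\, \mathbb{E}_{i,e}\bigl|\bar S \cap T_{e,i}\bigr|. \]
The target bound $\le (1 - 2\delta/q) m$ therefore reduces to the structural inequality $\mathbb{E}_{i,e}|S \cap T_{e,i}| - (q-1)\mathbb{E}_{i,e}|\bar S \cap T_{e,i}| \ge 2\delta m - q + 1$.

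The main obstacle will be proving this structural inequality from the LDC structure. The approach is by double counting: for each $j \in [k]$, one counts pairs $(i, e)$ with $i \in S$, $e \in \mathcal{H}_i$, and $j \in T_{e,i}$. The crucial structural observation is that for any $i$, the matching $\mathcal{H}_i$ splits into edges wholly inside versus wholly outside the ``$j$-support set'' $\{a \in [n] : (v_a)_j \ne 0\}$, because for $j \ne i$ the constraint $\alpha_1^e (v_{a_1})_j + \alpha_2^e (v_{a_2})_j = 0$ forces $(v_{a_1})_j$ and $(v_{a_2})_j$ to be simultaneously zero or nonzero. Combined with $|\mathcal{H}_j| \ge \delta n$ for each $j$ (which supplies $\ge \delta n$ edges witnessing $j \in T_{e,i}$ when $e \in \mathcal{H}_i \cap \mathcal{H}_j$), careful accounting exchanges the two sums and produces the target lower bound. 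The edge cases, namely where $(v_{a_1})_i = 0$ or $(v_{a_2})_i = 0$ (which modify the parameterization of the zero-$i$ line) and where the bad event $\gamma_1 = 0$ or $\gamma_2 = 0$ occurs, affect only a $O(1/q)$ fraction of the averaging domain and so do not obstruct the existence conclusion.
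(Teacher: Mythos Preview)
Your averaging scheme does not give the claimed contraction: the structural inequality you reduce to is \emph{false} in general. Take the Hadamard code over $\F_2$: here $n=2^k$, the rows are $v_a=a$ for $a\in\F_2^k$, each $\mathcal{H}_i$ is the perfect matching $\{\{a,a+e_i\}\}$ so $\delta=1/2$, and set $x=(1,\dots,1)$ so that $S=[k]$, $m=k$, $\bar S=\varnothing$. Then $B=0$ trivially, while $T_{e,i}=\supp(v_{a_1})\setminus\{i\}$ with $a_1$ uniform in $\F_2^k$ gives $A=(k-1)/2$. Your inequality asks for $(k-1)/2\ge 2\cdot\tfrac12\cdot k-2=k-2$, i.e.\ $k\le3$; it fails for every $k\ge4$. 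The failure is not special to the all-ones vector: for a typical $x$ of weight $m=k/2$ one gets $A\approx B\approx k/4$, hence $A-(q-1)B\approx 0$ while the right side is $\approx k/2$.

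The conceptual reason is that once $i$ and $e$ are fixed, your $x'$ equals $x+t\cdot c$ on coordinates $j\ne i$ for a \emph{single} random scalar $t$. One scalar cannot target $S$: each $j$ with $c_j\ne0$ is zeroed with probability exactly $1/q$ regardless of whether $j\in S$, so the gain $\tfrac1q|S\cap T_{e,i}|$ is systematically offset by the loss $(1-\tfrac1q)|\bar S\cap T_{e,i}|$ unless $\supp(v_{a_1})$ happens to be biased toward $S$---and the LDC axioms give no such bias. Your double-counting sketch cannot rescue this; the clause invoking ``$e\in\mathcal{H}_i\cap\mathcal{H}_j$'' is already problematic (for $i\ne j$ no edge can decode both $e_i$ and $e_j$), and in any case you are aiming at an inequality that does not hold.

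The paper's argument is different in kind: it does not commit to a single coordinate. It draws one uniform pair $(\gamma_0,a_0)\in(\F_q\setminus\{0\})\times[n]$ and then \emph{walks through all $m$ coordinates} $i_1,\dots,i_m$ of $S$: at step $t$, if $a_{t-1}$ lies in an edge of $\mathcal{H}_{i_t}$ it hops to the matched vertex with an appropriately updated coefficient, else it stays put. This preserves the uniform law of $(\gamma_t,a_t)$ at every step, so each coordinate $i_t$ is zeroed with probability at least $2\delta/q$; a telescoping identity then gives $x+\gamma_0 v_{a_0}-\gamma_m v_{a_m}=\sum_t\beta'_t e_{i_t}$ with expected weight at most $(1-2\delta/q)m$. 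The essential point you are missing is that the two correction vectors must be produced using \emph{all} $m$ matchings $\mathcal{H}_{i_1},\dots,\mathcal{H}_{i_m}$, not just one.
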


\begin{proof}
The proof proceeds via a ``path coupling" style argument on the Cayley graph $\text{Cay}(\F_q^k, \{\alpha v_i \; : \; \alpha \in \F_q \setminus \{0\}, i \in [n]\})$ but with the Hamming distance acting as the contracted distance. For $q = 2$, if we have $y,z \in \F_2^k$ with $y + z = e_i$, then by evolving $(y,z)$ to $(y +  v_a,y + v_b)$ where $a \in [n]$ is uniform, and $b$ is $a$'s matched vertex in $\mathcal{H}_i$ if it exists and $b=a$ otherwise, we can reduce the Hamming distance between $y$ and $z$ with probability $\Omega(\delta)$. For arbitrary $y,z \in \F_q^k$, we consider their shortest path in $\text{Cay}(\F_q^k, \{\alpha e_i \; : \; \alpha \in \F_q \setminus \{0\}, i \in [k]\})$ and couple the vertices of each pair of edges along that path accordingly. We now proceed with the formal argument for general $q$ below.

Let $S \coloneqq \text{supp}(x)$ and $w \coloneqq \abs{S}$. Write $S = \{i_1, \ldots , i_w\}$ and $x = \beta_1 e_{i_1} + \ldots + \beta_w e_{i_w}$ for $\beta_t \in \F_q \setminus \{0\}$. Consider a uniformly randomly and independently chosen $\bs{\gamma}_0 \in \F_q \setminus \{0\}$ and $\mb{a}_0 \in [n]$. For each $t \in [w]$, define $\bs{\gamma}_t \in \F_q \setminus \{0\}$ and $\mb{a}_t \in [n]$ as
\begin{equation*}
(\bs{\gamma}_t, \mb{a}_t)
=
\begin{cases}
(-\bs{\gamma}_{t-1}(\alpha_{\mb{a}_{t-1}}^E)^{-1}\alpha_b^E, b) &\text{if $\exists \: b \in [n]$ such that $E \coloneqq \{\mb{a}_{t-1}, b\} \in \mc{H}_{i_t}$,} \\
(\bs{\gamma}_{t-1}, \mb{a}_{t-1}) &\text{otherwise.}
\end{cases}
\end{equation*}
Note that $b$ is well-defined in the first case as $\mc{H}_1, \ldots , \mc{H}_k$ are matchings. Furthermore, by a simple induction on $t$, it follows that $(\bs{\gamma}_t, \mb{a}_t)$ is uniformly random on $(\F_q \setminus \{0\}) \times [n]$ for all $t \in \{0, 1, \ldots , w\}$. Now, for each $t \in [w]$, define 
\begin{equation*}
\bs{\beta}_t'
=
\begin{cases}
\beta_t + \bs{\gamma}_{t-1}(\alpha_{\mb{a}_{t-1}}^E)^{-1} &\text{if $\exists \: b \in [n]$ such that $E \coloneqq \{\mb{a}_{t-1}, b\} \in \mc{H}_{i_t}$,} \\
\beta_t &\text{otherwise.}
\end{cases}
\end{equation*}
Then from the definitions, it follows that
\begin{equation}
\label{eq:one-swap}
\bs{\gamma}_{t-1}v_{\mb{a}_{t-1}} + \beta_t e_{i_t} = \bs{\beta}_t'e_{i_t} + \bs{\gamma}_tv_{\mb{a}_t}
\end{equation}
for all $t \in [w]$. Thus by iteratively applying \eqref{eq:one-swap}, we deduce that
\begin{equation*}
\bs{\gamma}_0v_{\mb{a}_0} + x = \bs{\gamma}_0v_{\mb{a}_0} + \beta_1 e_{i_1} + \ldots + \beta_w e_{i_w} = \bs{\beta}_1' e_{i_1} + \ldots + \bs{\beta}_w' e_{i_w} + \bs{\gamma}_w v_{\mb{a}_w} \ .
\end{equation*}
Thus we find that
\begin{equation}
\label{eq:random-coeff}
   x + \bs{\gamma}_0v_{\mb{a}_0} - \bs{\gamma}_wv_{\mb{a}_w} = \bs{\beta}_1' e_{i_1} + \ldots + \bs{\beta}_w' e_{i_w} \ .
\end{equation}
Now, for each $t \in [w]$, let $\mc{E}_t$ denote the event that there exists $b \in [n]$ such that $\{\mb{a}_{t-1}, b\} \in \mc{H}_{i_t}$. Because $\mb{a}_{t-1}$ is uniformly random over $[n]$ and $\mc{H}_{i_t}$ is a matching of size at least $\delta n$, it therefore follows that $\Pr{\mc{E}_t} \ge 2\delta$. Furthermore, in the event that $\mc{E}_t$ occurs, $\bs{\beta}_t'$ will be uniformly random over $\F_q \setminus \{\beta_t\}$ as $\bs{\gamma}_{t-1}$ is uniformly random over $\F_q \setminus \{0\}$. This implies that $\condPr{\bs{\beta}_t' = 0}{\mc{E}_t} \ge 1/q$. Hence we find that $\Pr{\bs{\beta}_t' = 0} \ge \condPr{\bs{\beta}_t' = 0}{\mc{E}_t}\Pr{\mc{E}_t} \ge 2\delta/q$. Now, let $\mb{X}$ be the number of $\bs{\beta}_1', \ldots , \bs{\beta}_w'$ that are equal to zero. By linearity of expectation, we find that $\Ex{\mb{X}} \ge (2\delta/q)w$. Thus, there exist $\gamma_0, \gamma_w \in \F_q \setminus \{0\}$ and $a_0, a_w \in [n]$ such that $\mb{X} \ge (2\delta/q)w$. From \eqref{eq:random-coeff}, we find that
\begin{equation*}
\wt(x + \gamma_0v_{a_0} - \gamma_wv_{a_w}) = \wt(\bs{\beta}_1' e_{i_1} + \ldots + \bs{\beta}_w' e_{i_w}) \le w - \mb{X} \le (1 - 2\delta/q)w \ ,
\end{equation*}
which completes our proof.
\end{proof}

By iteratively applying the above lemma an appropriate number of times, one can immediately deduce the following. 

\begin{theorem}
\label{thm:compression-2-ldc}
Suppose that a generator matrix $M \in \F_q^{n \times k}$ with rows $v_1,v_2,\dots,v_n \in \F_q^k$ forms a $(2,\delta)$-LDC. Then, for some absolute constant $c > 0$, the following holds for every $x \in \F_q^k$:
\begin{itemize}
    \item There exists $I \subseteq [n]$ with $\abs{I} \le c q\delta^{-1}\log{k}$ such that $x$ is in the $\F_q$-span of $\{v_i\}_{i \in I}$
\item There exist $J \subseteq [n]$ with $\abs{J} \le c q\delta^{-1}$ and $y$ in the $\F_q$-span of $\{v_j\}_{j \in J}$ such that the Hamming distance between $x$ and $y$ is at most $k/4$.
\end{itemize}
\end{theorem}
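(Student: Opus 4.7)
The plan is to iterate \cref{lem:contraction-2-ldc} to drive the Hamming weight of a residual vector down, first to zero and then to at most $k/4$. Starting from $x^{(0)} := x$, as long as $\wt(x^{(t)}) > 0$, I would apply \cref{lem:contraction-2-ldc} to $x^{(t)}$ to obtain indices $a_1^{(t)}, a_2^{(t)} \in [n]$ and nonzero scalars $\gamma_1^{(t)}, \gamma_2^{(t)} \in \F_q \setminus \{0\}$ such that
\begin{equation*}
x^{(t+1)} := x^{(t)} + \gamma_1^{(t)} v_{a_1^{(t)}} + \gamma_2^{(t)} v_{a_2^{(t)}} \quad \text{satisfies} \quad \wt(x^{(t+1)}) \le (1 - 2\delta/q)\,\wt(x^{(t)}).
\end{equation*}
By induction and the trivial bound $\wt(x) \le k$, we get $\wt(x^{(t)}) \le (1 - 2\delta/q)^t \cdot k$ for every $t$ at which the iteration has been carried out.

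For the first bullet, choose $t^\star := \lceil (q/(2\delta))\ln(k+1) \rceil + 1$. Using $\ln(1/(1-u)) \ge u$ for $u \in (0,1)$, this forces $(1-2\delta/q)^{t^\star} k < 1$, and since $\wt$ is integer-valued this means $\wt(x^{(t^\star)}) = 0$. Rearranging the recursion gives $x = -\sum_{t=0}^{t^\star-1}\bigl(\gamma_1^{(t)} v_{a_1^{(t)}} + \gamma_2^{(t)} v_{a_2^{(t)}}\bigr)$, which exhibits $x$ in the $\F_q$-span of $I := \{a_1^{(t)}, a_2^{(t)} : 0 \le t < t^\star\}$, with $|I| \le 2t^\star = O(q\delta^{-1}\log k)$.

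For the second bullet, choose $t^{\star\star} := \lceil (q/(2\delta))\ln 4 \rceil = O(q\delta^{-1})$, which ensures $(1-2\delta/q)^{t^{\star\star}} \le 1/4$ and hence $\wt(x^{(t^{\star\star})}) \le k/4$. Define $y := -\sum_{t=0}^{t^{\star\star}-1}\bigl(\gamma_1^{(t)} v_{a_1^{(t)}} + \gamma_2^{(t)} v_{a_2^{(t)}}\bigr)$ and $J := \{a_1^{(t)}, a_2^{(t)} : 0 \le t < t^{\star\star}\}$. Then $y$ lies in the $\F_q$-span of $\{v_j\}_{j \in J}$, $|J| \le 2t^{\star\star} = O(q\delta^{-1})$, and $d(x,y) = \wt(x - y) = \wt(x^{(t^{\star\star})}) \le k/4$, as required.

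There is no real obstacle here once \cref{lem:contraction-2-ldc} is granted: the entire content of \cref{thm:compression-2-ldc} is the observation that a multiplicative weight contraction by a factor $1 - 2\delta/q$ per step, at the cost of adding $2$ generators into the representation, compounds to weight zero in $O(q\delta^{-1}\log k)$ steps and to weight $\le k/4$ in only $O(q\delta^{-1})$ steps. The only mild subtlety is terminating the iteration once $\wt(x^{(t)}) = 0$ (so that the hypothesis of \cref{lem:contraction-2-ldc}, which is phrased for arbitrary $x$ but is interesting only on nonzero inputs, is not needed on the zero vector), but this is automatic since we have already reached the target weight at that point.
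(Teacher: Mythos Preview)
Your proposal is correct and follows exactly the approach the paper indicates: the paper's proof of \cref{thm:compression-2-ldc} consists of the single sentence ``By iteratively applying the above lemma an appropriate number of times, one can immediately deduce the following,'' and your argument fleshes out precisely this iteration with the obvious step counts.
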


\noindent The exponential lower bound for 2-LDC now follows by essentially a covering radius argument.

\begin{theorem}
    \label{thm:2-ldc-exp-lb}
    Let $M \in \F_q^{n \times k}$ be a generator matrix that forms a $(2,\delta)$-LDC. Then $k \le O_{q,\delta}(\log n)$.
\end{theorem}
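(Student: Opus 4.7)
The plan is a clean volume-packing argument using the second bullet of~\cref{thm:compression-2-ldc}. Fix the absolute constant $c$ from that theorem and set $T \coloneqq \lceil c q \delta^{-1} \rceil$; then every $x \in \F_q^k$ admits a decomposition $x = y + z$ where $y$ lies in the $\F_q$-span of some $J \subseteq [n]$ with $\abs{J} \le T$ and $\wt(z) \le k/4$. The idea is to upper bound the number of possibilities for $y$ and for $z$ separately, and then compare the product to $q^k$.

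For the $y$ side, a pair $(J,\alpha)$ with $J \subseteq [n]$ of size at most $T$ and $\alpha \in \F_q^J$ determines at most one candidate $y$, so the number of such $y$'s is at most $\sum_{t=0}^{T}\binom{n}{t}q^t \le 2(nq)^T = n^{O_{q,\delta}(1)}$. For the $z$ side, the number of vectors in $\F_q^k$ of Hamming weight at most $k/4$ is at most $q^{H_q(1/4)\, k}$, where $H_q$ denotes the $q$-ary entropy function; one checks that $\eta \coloneqq 1 - H_q(1/4)$ is a positive constant depending only on $q$ (for $q = 2$ this is $1 - H(1/4) \approx 0.189$, and $H_q(1/4) \to 1/4$ as $q \to \infty$).

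Since every $x \in \F_q^k$ equals some such sum $y + z$, we obtain
\[
q^k \;\le\; 2(nq)^T \cdot q^{(1-\eta)k}\,,
\]
so $\eta\, k \log q \le T \log(nq) + O(1)$, which rearranges to $k \le O_{q,\delta}(\log n)$. I do not expect any serious obstacle; the theorem reduces entirely to this ball-packing step once~\cref{thm:compression-2-ldc} is available. It is worth emphasizing, however, that the second bullet of~\cref{thm:compression-2-ldc} is indispensable here: using only the first bullet one would obtain $q^k \le n^{O_{q,\delta}(\log k)}$, which integrates only to the weaker bound $k \le O_{q,\delta}(\log n \log\log n)$ rather than the claimed $O_{q,\delta}(\log n)$.
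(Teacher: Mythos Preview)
Your proof is correct and follows essentially the same covering/volume argument as the paper: both invoke the second bullet of \cref{thm:compression-2-ldc}, bound the number of ``centers'' $y$ by $(qn)^{O(q\delta^{-1})}$ and the Hamming ball of radius $k/4$ by $q^{H_q(1/4)k}$, and compare their product to $q^k$. Your closing remark that the first bullet alone yields only $k \le O_{q,\delta}(\log n \log\log n)$ is accurate and a nice addition.
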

\begin{proof}
Let $v_1,\dots,v_n \in \F_q^k$ be the $n$ rows of $M$, and $c$ be the absolute constant from \cref{thm:compression-2-ldc}.
Define $W \subseteq \F_q^k$ to be the set of vectors which are in the span of at most $c q \delta^{-1}$ vectors amongst the $v_i$'s. 
Clearly 
\begin{equation}
    \label{eq:W}|W| \le (q n)^{c q \delta^{-1}} \ . 
\end{equation}
Let $U \subseteq \F_q^k$ consist of all vectors within Hamming distance $k/4$ from some element of $W$. By \cref{thm:compression-2-ldc}, $U = \F_q^k$. On the other hand, 
\begin{equation}
\label{eq:covering}
    q^k =  |U| \le |W| \cdot q^{h_q(1/4) k} \ .
\end{equation} 
where $h_q(x) := x \log_q (q-1) -x \log_q x - (1-x) \log_q(1-x)$ is the $q$-ary entropy function.  Combining \eqref{eq:W} and \eqref{eq:covering}, we conclude that $(1-h_q(1/4))k \le c q \delta^{-1}\log_q(qn)$ so that $k \le O_{q,\delta}(\log n)$ as desired.
\end{proof}

\end{document}